\newcommand{\csp}{\hspace{0.5cm}}
\newcommand{\teal}{\color{teal}}
\newcommand{\blue}{\color{blue}}
\newcommand{\impl}{\ {:\!-}\  } 
\newcommand{\normalized}{canonical\xspace}
\newcommand{\normalization}{canonicalization\xspace}
\newcommand{\Normalization}{Canonicalization\xspace}
\newcommand{\nrml}{canon\xspace}
\newcommand{\belief}{belief event\xspace}    
\newcommand{\beliefs}{belief events\xspace}  
\newcommand{\bprogram}{\cP^{\cB}}  
\newcommand{\bworlds}{W_{\pprog}^{\cD}}
\newcommand{\bworldmeas}{\beta}
\newcommand{\ourpabbv}{CaLP\xspace}
\newcommand{\ourpabbvs}{CaLPs\xspace}
\newcommand{\ourprog}{Capacity Logic Program}
\newcommand{\ourprogs}{Capacity Logic Programs}
\newcommand{\bmu}{\xi^{\cB}}
\newcommand{\Belief}{\mathit{Belief}}
\newcommand{\Plaus}{\mathit{Plaus}}
\newcommand{\mass}{\mathit{mass}}
\newcommand{\domain}{\mathit{domain}}
\newcommand{\facts}{\mathit{facts}}
\newcommand{\doms}{\mathit{doms}}
\newcommand{\comp}{\mathit{comp}}
\definecolor{yellow}{HTML}{e6ed2a}
\newtheorem{example}{Example}
\newtheorem{definition}{Definition}
\newtheorem{theorem}{Theorem}
\begin{document}

\lefttitle{Azzolini, Riguzzi, and Swift}

\jnlPage{1}{8}
\jnlDoiYr{2021}
\doival{10.1017/xxxxx}

\title[Integrating Belief Domains into Probabilistic Logic Programs] {Integrating Belief Domains into Probabilistic Logic Programs}

\begin{authgrp}
\author{\gn{Damiano} \sn{Azzolini}}
\affiliation{University of Ferrara}
\email{damiano.azzolini@unife.it}
\author{\gn{Fabrizio} \sn{Riguzzi} }
\affiliation{University of Ferrara}
\email{fabrizio.riguzzi@unife.it}
\author{\gn{Theresa} \sn{Swift}}
\affiliation{Coherent Knowledge, Inc.}
\email{theresasturn@gmail.com}
\end{authgrp}

\history{\sub{xx xx xxxx;} \rev{xx xx xxxx;} \acc{xx xx xxxx}}

\maketitle

\begin{abstract}
  Probabilistic Logic Programming (PLP) under the Distribution Semantics
is a leading approach to practical reasoning under uncertainty. An
advantage of the Distribution Semantics is its suitability for
implementation as a Prolog or Python library, available through two
well-maintained implementations, namely ProbLog and cplint/PITA.
However, current formulations of the Distribution Semantics use
point-probabilities, making it difficult to express epistemic
uncertainty, such as arises from, for example, hierarchical
classifications from computer vision models. Belief functions
generalize probability measures as non-additive capacities, and
address epistemic uncertainty via interval probabilities.  This paper
introduces interval-based {\em \ourprogs} based on an extension of the
Distribution Semantics to include belief functions, and describes
properties of the new framework that make it amenable to practical
applications.

\end{abstract}

\begin{keywords}
Statistical Relational AI, Inference, Tabling, Imprecise Probability
\end{keywords}

\maketitle

\section{Introduction}
\label{sec:introduction}

Despite the importance of probabilistic reasoning in Symbolic AI, the
use of point probabilities in reasoning can be problematic.  Point
probabilities may implicitly assume a degree of knowledge about a
distribution that is not actually available, resulting in reasoning
based on unknown bias and variance.  Approaches to this lack of
knowledge, or {\em epistemic uncertainty}, often use probabilistic
intervals, such as Credal Networks~\citep{Cozm00} or interval
probabilities~\citep{Weich01}.  One appealing approach generalizes
probability theory through {\em belief functions} or {\em Dempster
  Shafer theory}~\citep{Shaf75}.

\begin{example}[Drawing Balls from an Urn]
\label{ex:urn}
  To illustrate how belief functions address epistemic uncertainty,
  suppose we want to determine the probability of picking a ball of a
  given color from an urn, given the knowledge:
  \begin{itemize}
  \item 30\% of the balls are red
  \item 10\% of the balls are blue
  \item 60\% of the balls are either blue or yellow
  \end{itemize}
  It is clear how to assign probability to red balls, but what about
  blue or yellow balls?  At least 10\% and at most 70\% of the balls
  are blue.  Similarly, between 0\% and 60\% of the balls are yellow.
  As will be explained in Section~\ref{sec:belief-prelim}, belief
  functions state that the {\em belief} that a ball {\em must} be blue
  is 10\%, while the {\em plausibility} that a ball {\em could} be
  blue is 70\%.
\end{example}
Example~\ref{ex:urn} describes a situation where {\em aleatory}
uncertainty, i.e., the uncertainty inherent in pulling a ball from the
urn, cannot be fully specified because of epistemic uncertainty, i.e.,
limited evidence about the distribution of blue and yellow balls.  Belief
functions separate aleatory from epistemic uncertainty by generalizing
probability distributions to {\em mass functions} that, in discrete
domains, may assign uncertainty mass to {\em any} set in a domain
rather than just to singleton sets.
For instance in Example~\ref{ex:urn}, a mass function $m_{urn}$ would
assign 0.6 to the set \texttt{\{blue,yellow\}}, 0.1 to
\texttt{\{blue\}}, and 0.3 to \texttt{\{red\}}.
Belief functions thus can be seen both as a way of handling epistemic
uncertainty or incomplete evidence~\citep{Shaf75}, and as a
generalization of probability theory \citep{HalF92}.
Mathematically, because belief functions are non-additive, they are
not measures like probabilities, but Choquet {\em
  capacities}~\citep{choquet1954theory}.

Let us now introduce a more involved example on how incorporation of
belief can help to address an important issue for neuro-symbolic
reasoning.

\begin{example}[Reasoning about Visual Classifications]
\label{example-ontology}
Consider the actions of a {\em unmanned aerial vehicle (UAV)} that
flies over roadways to monitor safety issues such as broken-down
vehicles on the side of a road, a task that includes visual
classification.
As a first step,
an image is sent to a neural visual model $V_{mod}$
that locates and classifies objects within bounding
boxes.  A reasoner then combines this visual information with traffic
reports and background knowledge to determine the proper action to
take.

To understand the relevance of belief functions to this scenario,
consider that visual models like $V_{mod}$ associate a confidence
score with each object classification.  Typically such scores are
normalized via a softmax transformation, but even if carefully
calibrated~\citep{GPSW17}, the scores may not have a frequentist or
subjectivist probabilistic interpretation. One way to provide a
frequentist interpretation is to construct a confusion matrix
$M_{Cls}$ from $V_{mod}$'s test set.  Given the set $Cls$ of object
types in $V_{mod}$, $M_{Cls}$ provides conditional probability
estimates about true classifications based on detected
classifications, and vice-versa.

A common situation when using a visual model is that some of its
training labels may be mapped to leaf classes of an ontology, while
others may be mapped to inner classes.
This is an example
of Hierarchical Multi-Label Classification~\citep{giunchiglia2020coherent}
where the classes are organized in a tree and the predictions must be coherent, i.e., 
an instance belonging to a class must belong to all of its ancestors in the hierarchy.
Figure~\ref{fig:conf-vector}
shows a fragment of an ontology and a thresholded confusion vector for
$V_{mod}$ representing $P(trueClass \mid detectedClass)$ for an unknown object
$O_{unk}$.  Note that $V_{mod}$  assigns probability mass both to
general levels, such as {\sf Passenger Vehicle}, and to leaf levels
such as {\sf Chevy}.  To reason about this situation, a flexible
model, capable of handling both aleatory and epistemic uncertainty, is
needed.

\begin{figure}[hbtp] 
\begin{small}
  \begin{tabular}{l|l}
\csp    Ontology                                & \csp Probability Mass Assignment \\ \hline
\csp {\blue Stationary Object}               &  \csp {\teal  0.0332} \\ 
\csp \csp  {\blue Dumpster}                  &  \csp {\teal  0.264} \\
\csp \csp {\blue Kiosk}                      &  \csp {\teal  0.011} \\
\csp {\blue Motorized Vehicle}               &  \csp {\teal  0.02} \\
\csp \csp {\blue Passenger Vehicle}         &  \csp {\teal  0.228}  \\ 
\csp \csp \csp {\blue Personal Passenger Vehicle} &  \csp {\teal  0.0489} \\ 
\csp \csp \csp \csp  {\blue Chevy }          & \csp {\teal 0.549} \\ 
\csp \csp \csp \csp {\blue Fiat}            & \csp {\teal  0.102} \\ 
\csp \csp {\blue Material Transport Vehicle} & \csp {\teal  0.203} \\
\csp \csp \csp {\blue Freight Truck}         & \csp {\teal  0.0182}       \\ 
\csp \csp \csp \csp {\blue Cement Truck}     & \csp {\teal  0.0709} \\ 
\csp \csp \csp \csp {\blue Cattle truck}     & \csp {\teal  0.0135}\\ 
\csp \csp {\blue \dots}                      &  \csp {\teal  \dots} \\
\end{tabular}
\end{small}
\caption{Example of Hierarchy of Objects.}
\label{fig:conf-vector}
\end{figure}
\end{example}

This paper shows how the distribution semantics for Probabilistic
Logic Programs
(PLPs)~\citep{DBLP:journals/ai/Poole93,DBLP:conf/iclp/Sato95} can be
extended to incorporate {\em belief domains} as a foundation for {\em
  \ourprogs (\ourpabbvs)}.  Specifically, our contribution
is to fully develop the distribution semantics with belief
domains, and based on that development,
\begin{itemize}
  \item We show that the set of all {\em belief worlds} for a
    \ourpabbv form a {\em normalized capacity} --- an analog of
    a probability measure that is used for belief functions.
  \item We demonstrate measure equivalence between
    (pairwise-incompatible) composite choices and belief worlds; and
  \item We show that a \ourpabbv also forms the basis of a
    normalized capacity space.
\end{itemize}
The upshot of these correspondences of belief worlds and composite
choices is that the implementation of \ourpabbvs will be possible for
systems based on the distribution semantics such as ProbLog and
Cplint/PITA.
The structure of the paper is as follows.  Section~\ref{sec:prelim}
presents background on belief functions and the distribution
semantics.  Section~\ref{sec:blps} extends the distribution semantics
to include belief functions, and characterizes its properties.
Section~\ref{sec:related} surveys related work and
Section~\ref{sec:conclusion} concludes the paper.

\section{Preliminaries} \label{sec:prelim}
Throughout this paper, we restrict our attention to discrete
probability measures and capacities, both defined over finite sets,
and to programs without function symbols.

\subsection{Capacities and Belief Functions} \label{sec:belief-prelim}
We recall the definition of probability measures and
spaces~\citep{ash2000probability}.

\begin{definition}[Probability Measure and Space] \label{def:prob-space}
Let $\cX$ be a non-empty finite set called the \textit{sample space}
and $\mathbb{P}(\cX)$ the powerset of $\cX$
called the {\em event  space}.
A {\em probability measure} is a function $P:\mathbb{P}({\cX})\rightarrow
\realnumbers$ such that
i) $\forall x \in \mathbb{P}(\cX), P(x) \geq 0$,
ii) $P(\cX) = 1$, and
iii) for disjoint $S_i \in \mathbb{P}(\cX): P(\bigcup_i \cS_i) = \sum_i P(\cS_i)$.
A {\em probability space} is a triple
$(\cX,\mathbb{P}(\cX),P)$.\footnote{Since we restrict our attention to
measures over finite sets, there is no loss of generality in the use of
powersets rather than $\sigma$-algebras, as any $\sigma$-algebra
over a finite set is isomorphic to the powerset of a set with smaller
cardinality.}
\end{definition}

All measures, including probability measures, have the additivity
property. That is, the measure assigned to the union of disjoint events is
equal to the sum of the measures of the sets.  Belief and plausibility
functions~\citep{Shaf75} are non-additive {\em capacities} rather than
measures~\citep{choquet1954theory,Grab2016}.

 \ourpabbvs in Section~\ref{sec:blps} use belief and plausibility
functions based on multiple capacity spaces or {\em domains}.
Accordingly, we use {\em domain identifiers} in the following
definitions.

\begin{definition}[Capacity Space]  \label{def:capacity}
Let $\cX$ be a non-empty finite set called the \textit{frame of
  discernment}. A {\em capacity} is a function $\xi:\mathbb{P}(\cX)
\rightarrow [0,1]$ such that i) $\xi(\emptyset)= 0$ and ii) for
$\cA,\cB \subseteq \cX$, $\cA \subseteq \cB \Rightarrow \xi(\cA) \leq
\xi(\cB)$.  If $\xi(\cX)=1$, $\xi$ is {\em normalized}.  A {\em
  capacity space} is a tuple $(D,\cX,\mathbb{P}(\cX),\xi)$ where $D$
is a string called a {\em domain identifier}.
\end{definition}

Belief and plausibility functions are capacities based on {\em mass
  functions}.

\begin{definition}[Mass Functions]  \label{def:mass-function}
Let $\cX$ be a non-empty finite set
and $D$ a domain identifier.  A {\em mass function} $\mass : D\times\mathbb{P}(\cX)
\rightarrow \realnumbers$
satisfies the following properties: i) $\mass(D,\emptyset) = 0$, ii) $\forall
x \in \mathbb{P}(\cX), \mass(D,x) \geq 0$, and iii)
$\sum_{x \in \mathbb{P}(\cX)} \mass(D,x) = 1$.
A set $x \subseteq \mathbb{P}(\cX)$ where $\mass(D,x) \neq 0$ is
a {\em focal set}.
\end{definition}

\begin{definition}[Belief and Plausibility Capacities] \label{def:belief-plaus}
Let $\cX$ be a non-empty finite set, $mass$ a mass function, $D$ a domain identifier, and $X \in
\mathbb{P}(\cX)$.
\begin{itemize}
\item The {\em belief} of $X$ is the sum of masses of all (not necessarily proper)
  subsets of $X$
  \begin{small}
    \begin{equation}
      \label{eq:belief}
      \Belief(D,X) = \sum_{B \subseteq X} \mass(D,B)
    \end{equation}
  \end{small}
    \item The {\em plausibility} of $X$ is the sum of the masses of
      all sets that are non-disjoint with $X$, and can be stated in
      terms of belief. Denoting the set complement of $X$ as $\neg X$:
      \begin{equation}
        \Plaus(D,X) = 1 - \Belief(D,\neg X)  
      \label{eq:plaus}
      \end{equation}
\end{itemize} 
\end{definition}
From the above definitions, it can be seen that $\Belief$ and $\Plaus$
are capacities.  Also, belief and plausibility are {\em conjuncts}, as
it is also true that $\Belief(D,A) = 1 - \Plaus(D,\neg A)$.
A capacity space $(D,\cX,\mathbb{P}(\cX),\xi)$ for which $\xi$ is 
a belief or plausibility function is a {\em belief domain}, and
$\mathbb{P}(\cX)$ is called the {\em belief event space} of such a space.


\begin{example} \label{ex:urn-domains}
We may represent a belief domain by the predicates {\em domain/2} which represents the frame of discernment ($\cX$) and {\em mass/3} which represents the mass function.
The belief domain of Example~\ref{ex:urn}, which we call {\em urn1}, can be represented as:
\begin{small}
\begin{equation*}
  \begin{split}
    & \domain(urn1,\{blue,red,yellow\}). \\
    & \mass(urn1,\{blue\},0.1). \ \mass(urn1,\{red\},0.3). \ \mass(urn1,\{blue,yellow\},0.6). 
  \end{split}
\end{equation*}
Here, $\Belief(urn1,\{red,yellow\}) = 0.3$ and $\Plaus(urn1,\{red,yellow\}) = 1 - \Belief(urn1,\{blue\}) = 0.9$.
\end{small}
\end{example}

\subsection{Probabilistic Logic Programs (PLPs)}
\label{secLplps}

Among several equivalent languages for PLP under the distribution semantics, we consider ProbLog~\citep{DBLP:conf/ijcai/RaedtKT07} for simplicity of exposition.
A \emph{ProbLog program} $\pprog=\tuple{\ruleset,\factset}$ consists
of a finite set $\ruleset$ of (certain) logic programming rules and a
finite set $\factset$ of \emph{probabilistic facts} of the form
$p_i::f_i$ where $p_i\in [0,1]$ and $f_i$ is an atom, meaning that we
have evidence of the truth of each ground instantiation $f_i\theta$ of
$f_i$ with probability $p_i$ and of its negation with probability
$1-p_i$.  Without loss of generality, we assume that atoms in
probabilistic facts do not unify with the head of any rule and that
all probabilistic facts are independent (cf.~\citep{Rig23-BK}).
Given a ProbLog program $\pprog=\tuple{\ruleset,\factset}$, its \emph{grounding} ($ground(\pprog)$) is defined as $\tuple{ground(\ruleset),ground(\factset)}$.
With a slight abuse of notation, we sometimes use $\factset$ to
indicate the set of atoms $f_i$ of probabilistic facts.  The meaning
of $\factset$ will be clear from the context.

\subsubsection{The Distribution Semantics for ProbLog Programs without Function Symbols}
\label{sec:pb-sem-no-fs}
For a ProbLog program $\pprog = \tuple{\ruleset,\factset}$, a
grounding $\tuple{\ruleset,\factset'}$, such that $\factset' \subseteq
ground(\factset)$ is called a {\em world}.  $W_{\pprog{}}$ denotes the
set of all worlds.  Whenever $\pprog$ contains no function symbols,
$ground(\factset)$ is finite, so $W_{\pprog{}}$ is also finite.
We also assume that each ProbLog program $\pprog =
\tuple{\ruleset,\factset}$ is {\em uniformly total}, i.e., that each
world has a total well-founded model~\citep{Przymusinski89}, so in each
world every ground atom is either true or false.
The condition that a
program is uniformly total is the most general way to express that
every world is associated with a stratified program, so that the
distribution semantics is well-defined.
We define a measure on a set of worlds, and how to compute the
probability for a query.

\begin{definition}[Measures on Worlds and Sets of Worlds]
\label{def:pworld-measure}
$\rho_{\pprog}:W_{\pprog{}}\rightarrow \realnumbers$ is a measure on worlds such that for each $w\in W_{\pprog{}}$
$$
\rho_{\pprog{}}(w) = \prod_{p::a \in \factset \mid a\in w} p \prod_{p::a \in \factset \mid a\not\in w} (1-p).
$$
$\mu_{\pprog{}}:\mathbb{P}(W_{\pprog{}})\rightarrow\realnumbers$ is a measure on sets of worlds as such that for each $\omega \in \mathbb{P}(W_{\pprog{}})$
$$
\mu_{\pprog{}}(\omega)=\sum_{w\in\omega}\rho_{\pprog{}}(w).
$$
\end{definition}
From Definition~\ref{def:pworld-measure}, $(W_{\pprog{}},\mathbb{P}(W_{\pprog{}}),\mu_{\pprog{}})$ is a probability space and $\mu_{\pprog{}}$ a probability measure.

\begin{definition}
\label{def:indicator_plp}
Given a ground atom $q$, define
$Q:W_{\pprog{}}\to\{0,1\}$ as
\begin{equation}
\label{random_var_finite_case_plp}
Q(w) = \left\{
                \begin{array}{ll}
                        1 & \mbox{if }w\models q \\
                        0 & \mbox{otherwise}
		\end{array}
                \right.
\end{equation}
 $w\models q$ means that $q$ is true in the well-founded model of $w$.
\end{definition}

Since $Q^{-1}(\gamma)\in \mathbb{P}(W_{\pprog{}})$ for all
$\gamma\subseteq \{0,1\}$, and since the range of $Q$ is measurable,
$Q$ is a random variable.  The distribution of $Q$ is defined by
$P(Q=1)$ ($P(Q=0)$ is given by $1-P(Q=1)$) and for a ground atom
$q$ we indicate $P(Q=1)$ by $P(q)$.

\noindent
Given this discussion, we compute the probability of a ground atom $q$ called \textit{query}
as
$$
P(q)    = \mu_{\pprog{}}(Q^{-1}(1)) 
        = \mu_{\pprog{}}(\{w \mid w \in W_{\pprog{}},w\models q\})
        = \sum_{w\in W_{\pprog{}} \mid w\models q}\rho_{\pprog{}}(w).
$$
That is, $P(q)$ is the sum of probabilities associated with each world in which $q$ is true.

\subsubsection{Computing Queries to Probabilistic Logic Programs}
\label{sec:plp-queries}

Let us introduce some terminology.  An \textit{atomic choice}
indicates whether or not a ground probabilistic fact $p_i :: f_i$ is
selected, and is represented by the pair $(f_i,k_i)$ where $k_i \in
\{0,1\}$: $k_i = 1$ indicates that $f_i$ is selected, $k_i = 0$
that it is not.
A set of atomic choices is \textit{consistent} if only one alternative
is selected for any probabilistic fact, i.e., the set does not contain
atomic choices $(f_i,0)$ and $(f_i,1)$ for any $f_i$.  A
\textit{composite choice} $\kappa$ is a consistent set of atomic
choices.

\begin{definition}[Measure of a Composite Choice]
\label{def:comp-choice-measure}
Given a composite choice $\kappa$, we define the function $\rho_{c}$ as
$$
\rho_{c}(\kappa)=\prod_{(f_i,1)\in\kappa}p_i\prod_{(f_i,0)\in\kappa}(1-p_i).
$$
\end{definition}
A \textit{selection} $\sigma$ (also called a {\em total composite
  choice}) contains one atomic choice for every probabilistic fact.
From the preceding discussion, it is immediate that a selection
$\sigma$ identifies a world $w_{\sigma}$.  The \emph{set of worlds
$\omega_\kappa$ compatible with a composite choice} $\kappa$ is
$\omega_\kappa=\{w_{\sigma}\in W_{\pprog} \mid
\kappa\subseteq\sigma\}$.  Therefore, a composite choice identifies a
set of worlds.
Given a set of composite choices $K$, the \emph{set of worlds $\omega_K$ compatible with} ${K}$ is
$\omega_{K}=\bigcup_{\kappa\in {K}}\omega_\kappa$.  
Two sets $K_1$ and $K_2$ of composite choices are
\emph{equivalent} if $\omega_{K_1}=\omega_{K_2}$, that is, if they identify the same set of worlds.
If the union of two composite choices $\kappa_1$ and $\kappa_2$ is not consistent, then $\kappa_1$ and $\kappa_2$ are \emph{incompatible}.
We define as \emph{pairwise incompatible} a set $K$ of
composite choices if $\forall \kappa_1\in K, \forall
\kappa_2\in K$, $\kappa_1\neq\kappa_2$ implies that $\kappa_1$
and $\kappa_2$ are incompatible.
If $K$ is a pairwise incompatible set 
of composite choices, define  $\mu_c(K)=\sum_{\kappa\in K}\rho_{c}(\kappa)$.

Given a general set $K$ of composite choices, we can construct a pairwise incompatible equivalent set through the technique of \emph{splitting}.
In detail, if $\f$ is a probabilistic fact and $\kappa$ is a composite choice that does not contain an atomic choice $(\f,k)$ for any $k$, the \emph{split} of $\kappa$ on $\f$ can be defined as the set of composite choices $S_{\kappa,\f}=\{\kappa\cup\{(\f,0)\},\kappa\cup\{(\f,1)\}\}$.
In this way, $\kappa$ and $S_{\kappa,\f}$ identify the same set of possible worlds, i.e., $\omega_\kappa=\omega_{S_{\kappa,\f}}$, and $S_{\kappa,\f}$ is pairwise incompatible.
It turns out that, given a set of composite choices, by repeatedly applying splitting it is possible to obtain an equivalent mutually incompatible set of composite choices.
\begin{theorem}[\citep{DBLP:journals/jlp/Poole00}] \label{incompatible_set}
        Let $K$ be a  set of composite choices.  Then there is
        a pairwise incompatible set of composite choices equivalent
        to $K$.
\end{theorem}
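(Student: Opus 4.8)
The plan is to prove the theorem constructively, by iterating the splitting operation until no two of the composite choices at hand remain compatible. The key preliminary observation is that, because $\pprog$ has no function symbols, $ground(\factset)$ is finite, say of size $N$; hence every composite choice is a finite set of at most $N$ atomic choices, there are only finitely many composite choices in total, and in particular $K$ itself is finite. This finiteness is exactly what will make the iteration terminate.

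The engine of the construction is a \emph{difference} operation. Given composite choices $\rho$ and $\kappa$, I would build a pairwise incompatible set $D(\rho,\kappa)$ such that (a) every element of $D(\rho,\kappa)$ is incompatible with $\kappa$ and (b) $\omega_{D(\rho,\kappa)} = \omega_\rho \setminus \omega_\kappa$. Define $D$ by recursion on $|\kappa \setminus \rho|$: if $\rho$ and $\kappa$ are incompatible, put $D(\rho,\kappa) = \{\rho\}$; if $\kappa \subseteq \rho$, then $\omega_\rho \subseteq \omega_\kappa$ and we put $D(\rho,\kappa) = \emptyset$; otherwise choose any atomic choice $(\f,k) \in \kappa \setminus \rho$, observe that since $(\f,k)\in\kappa$, $(\f,k)\notin\rho$, and $\rho,\kappa$ are compatible, $\rho$ contains no atomic choice on $\f$ at all, so the split $S_{\rho,\f}$ is defined, and set $D(\rho,\kappa) = \{\rho \cup \{(\f,1-k)\}\} \cup D(\rho \cup \{(\f,k)\},\kappa)$. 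The first half is incompatible with $\kappa$; the recursive argument has $|\kappa \setminus (\rho \cup \{(\f,k)\})| < |\kappa \setminus \rho|$, so the recursion bottoms out, using $|\kappa| \le N$. Properties (a), (b), and pairwise incompatibility of $D(\rho,\kappa)$ then follow by an easy induction, using only $\omega_\rho = \omega_{S_{\rho,\f}}$ (given in the excerpt) and the fact that enlarging a composite choice preserves incompatibility with any fixed composite choice.

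Given $D$, the theorem follows by induction on $|K|$. The cases $|K| \le 1$ are immediate. For the inductive step, write $K = K_0 \cup \{\kappa\}$ with $\kappa \notin K_0$; by the induction hypothesis there is a pairwise incompatible set $K_0' = \{\kappa_1,\dots,\kappa_m\}$ with $\omega_{K_0'} = \omega_{K_0}$. Now fold in $\kappa$: put $R_0 = \{\kappa\}$ and, for $i = 1,\dots,m$, put $R_i = \bigcup_{\rho \in R_{i-1}} D(\rho,\kappa_i)$. By induction on $i$, $R_i$ is pairwise incompatible and each of its elements is incompatible with all of $\kappa_1,\dots,\kappa_i$: incompatibility with $\kappa_i$ is property (a); incompatibility with $\kappa_j$ for $j<i$ is inherited because every element of $R_i$ contains some element of $R_{i-1}$; pairwise incompatibility within $R_i$ holds because each $D(\rho,\kappa_i)$ is internally pairwise incompatible while elements arising from distinct $\rho,\rho'\in R_{i-1}$ contain the incompatible $\rho,\rho'$. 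Also $\omega_{\kappa_1}\cup\cdots\cup\omega_{\kappa_m}\cup\omega_{R_i}$ does not depend on $i$, since passing from $R_{i-1}$ to $R_i$ replaces each $\omega_\rho$ by $\omega_\rho\setminus\omega_{\kappa_i}$, and $\omega_{\kappa_i}$ is already present in the union. Therefore $K_0'\cup R_m$ is pairwise incompatible and $\omega_{K_0'\cup R_m} = \omega_{K_0'}\cup\omega_{R_0} = \omega_{K_0}\cup\omega_\kappa = \omega_K$, which proves the claim.

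The step I expect to be the main obstacle is controlling pairwise incompatibility \emph{globally} while refining one composite choice at a time: one must be certain that splitting $\rho$ against a newcomer $\kappa_i$ cannot resurrect a compatibility with some $\kappa_j$ handled earlier. The two elementary facts that make this go through --- a split on $\f$ only refines a composite choice (it never coarsens it), and forming a superset of a composite choice preserves incompatibility with any fixed composite choice --- must be proved before the inductions can be run, and termination of $D$ must be pinned to the finite bound $N$ on the size of composite choices. For completeness I would also note a slicker but far less economical alternative that bypasses splitting entirely: the set of all selections $\sigma$ for which $\kappa \subseteq \sigma$ for some $\kappa \in K$ is finite, pairwise incompatible since two distinct selections disagree on some fact, and its compatible worlds are exactly those in $\omega_K$; however this maximally refined witness is typically much larger than necessary, whereas the splitting-based construction above is the one the rest of the paper relies on.
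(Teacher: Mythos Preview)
Your argument is correct, but note first that the paper does not actually prove this theorem in-line: it is stated with a citation to Poole (2000). The paper \emph{does}, however, give a proof of the analogous Theorem~\ref{incompatible_capacity_set} for capacity composite choices, and that proof is essentially Poole's original argument specialized back to the probabilistic case. Comparing your construction to that one: the paper uses a simple nondeterministic rewriting system with two local rules --- (i) if $\kappa_1\subset\kappa_2$ remove the dominated $\kappa_2$, and (ii) if $\kappa_1,\kappa_2$ are compatible and neither contains the other, pick $(\f,k)\in\kappa_1\setminus\kappa_2$ and replace $\kappa_2$ by its split $S_{\kappa_2,\f}$ --- iterated until neither rule applies. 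Termination is argued from finiteness of the set of all composite choices.

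Your approach is genuinely different in organization: you package the splitting into a recursive \emph{difference} operator $D(\rho,\kappa)$ with an explicit well-founded measure $|\kappa\setminus\rho|$, and then run an outer induction on $|K|$, folding in one composite choice at a time against an already pairwise-incompatible remainder. This buys you a cleaner termination argument and an explicit invariant (``every element of $R_i$ extends some element of $R_{i-1}$, hence inherits incompatibilities''), which the paper's proof leaves implicit. Conversely, the paper's two-rule system is shorter to state and closer to how the algorithm is actually implemented. Your closing remark about the selection-based witness is also a valid, if wasteful, alternative that the paper does not mention.
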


\begin{theorem} [\citep{DBLP:journals/ai/Poole93} Measure Equivalence
                 for Composite Choices ]\label{thm:pairwise-equiv}
	If $K_1$ and $K_2$ are both pairwise incompatible  sets of  composite choices such that they are equivalent, then $\mu_c(K_1)=\mu_c(K_2)$.
\end{theorem}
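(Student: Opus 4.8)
The plan is to reduce both pairwise incompatible sets to a common, maximally refined form by repeatedly applying the splitting operation, and to observe that splitting changes neither $\mu_c$ nor pairwise incompatibility. The core observation is a \emph{splitting invariance lemma}: if $K$ is pairwise incompatible, $\kappa \in K$, and $f$ is a ground probabilistic fact with probability $p$ that is not decided by $\kappa$, then $K' = (K \setminus \{\kappa\}) \cup S_{\kappa,f}$ is again pairwise incompatible and $\mu_c(K') = \mu_c(K)$. Pairwise incompatibility is preserved because the two elements of $S_{\kappa,f}$ are mutually incompatible (their union contains both $(f,0)$ and $(f,1)$) and each is a superset of $\kappa$, hence still incompatible with every other member of $K$; the measure is preserved because $\rho_{c}(\kappa \cup \{(f,1)\}) + \rho_{c}(\kappa \cup \{(f,0)\}) = p\,\rho_{c}(\kappa) + (1-p)\,\rho_{c}(\kappa) = \rho_{c}(\kappa)$. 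Together with the already-noted identity $\omega_\kappa = \omega_{S_{\kappa,f}}$, each split also preserves $\omega_K$.

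I would then iterate splitting on $K_1$ and on $K_2$ until every composite choice they contain is a \emph{selection} (a total composite choice). This terminates because $\pprog$ has no function symbols, so $ground(\factset)$ is finite and each split replaces a non-total composite choice by two composite choices that decide one more fact. The result is, for each $i$, a pairwise incompatible set $K_i'$ of selections with $\omega_{K_i'} = \omega_{K_i}$ and $\mu_c(K_i') = \mu_c(K_i)$; and since $K_1$ and $K_2$ are equivalent, $\omega_{K_1'} = \omega_{K_2'}$. Now each selection $\sigma$ satisfies $\omega_\sigma = \{w_\sigma\}$, and $\sigma \mapsto w_\sigma$ is a bijection between selections and worlds (a world is precisely a choice of truth value for every atom of $ground(\factset)$). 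Hence $\{w_\sigma \mid \sigma \in K_1'\} = \omega_{K_1'} = \omega_{K_2'} = \{w_\sigma \mid \sigma \in K_2'\}$, and injectivity of $\sigma \mapsto w_\sigma$ forces $K_1' = K_2'$. Therefore $\mu_c(K_1) = \mu_c(K_1') = \mu_c(K_2') = \mu_c(K_2)$.

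The routine parts are the two one-line algebraic identities; the step needing care is the bookkeeping of the iterated splitting --- proving termination with a suitable decreasing measure (e.g.\ $\sum_{\kappa \in K}(N - |\kappa|)\,2^{N-|\kappa|}$ with $N = |ground(\factset)|$ and $|\kappa|$ the number of facts decided by $\kappa$), and verifying that at every step pairwise incompatibility with the untouched members is maintained --- together with the small but essential point that a selection is uniquely pinned down by any world compatible with it, which is exactly what upgrades ``same set of worlds'' to ``same set of selections.'' A shorter alternative would bypass splitting: first show $\rho_{c}(\kappa) = \mu_{\pprog}(\omega_\kappa)$ by telescoping over the facts $\kappa$ leaves undecided (using $p + (1-p) = 1$); then for pairwise incompatible $K$ the sets $\{\omega_\kappa\}_{\kappa \in K}$ are disjoint, so $\mu_c(K) = \sum_{\kappa \in K}\mu_{\pprog}(\omega_\kappa) = \mu_{\pprog}(\omega_K)$ by additivity of the probability measure $\mu_{\pprog}$, and the claim follows immediately from $\omega_{K_1} = \omega_{K_2}$.
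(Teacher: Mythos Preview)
Your proposal is correct. The paper does not actually supply a proof of this theorem---it is stated as a cited result from \cite{DBLP:journals/ai/Poole93}---but the proof of the analogous capacity version (Theorem~\ref{thm:pairwise-capacity-equiv}) indicates that Poole's argument ``hinges on the fact that the probabilities of $(\f_i,0)$ and $(\f_i,1)$ sum to~1,'' which is precisely your splitting invariance lemma $\rho_c(\kappa\cup\{(\f,1)\})+\rho_c(\kappa\cup\{(\f,0)\})=\rho_c(\kappa)$; so your main route is the intended one. Your alternative via $\rho_c(\kappa)=\mu_{\pprog}(\omega_\kappa)$ and additivity of $\mu_{\pprog}$ is a clean shortcut that bypasses the termination bookkeeping entirely, and is arguably preferable in the finite function-free setting assumed here.
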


\begin{theorem}[Probability Space of a Program]
\label{omega_algebra}
Let $\pprog$ be a ProbLog program without function symbols and let  $\Omega_{\pprog}$ be $$\{\omega_K \mid K \mbox{ is a set of composite choices}\}.$$
Then $\Omega_{\pprog}=\mathbb{P}(W_{\pprog{}})$ and
$\mu_{\pprog}(\omega_K)=\mu_c(K')$ where $K'$ is a pairwise incompatible set of composite choices equivalent to $K$.
\end{theorem}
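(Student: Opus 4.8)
The plan is to prove the two conjuncts separately: first the set equality $\Omega_{\pprog}=\mathbb{P}(W_{\pprog{}})$, and then the measure identity. For the set equality, the inclusion $\Omega_{\pprog}\subseteq\mathbb{P}(W_{\pprog{}})$ is immediate, since by definition each $\omega_K=\bigcup_{\kappa\in K}\omega_\kappa$ is a union of subsets of $W_{\pprog{}}$. For the reverse inclusion I would exploit the absence of function symbols: then $ground(\factset)$ is finite, every world $w$ arises from a (finite, consistent) selection $\sigma_w$, and $\omega_{\sigma_w}=\{w\}$. Hence, given an arbitrary $A\in\mathbb{P}(W_{\pprog{}})$, put $K_A=\{\sigma_w\mid w\in A\}$ (taking $K_A=\emptyset$ if $A=\emptyset$); then $\omega_{K_A}=\bigcup_{w\in A}\omega_{\sigma_w}=\bigcup_{w\in A}\{w\}=A$, so $A\in\Omega_{\pprog}$.

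For the measure identity, I would first observe that Theorem~\ref{incompatible_set} guarantees the existence of a pairwise incompatible $K'$ equivalent to $K$, and Theorem~\ref{thm:pairwise-equiv} guarantees that $\mu_c(K')$ is independent of the particular such $K'$, so the right-hand side is well defined. The key auxiliary claim is that for a single composite choice $\kappa$ one has $\mu_{\pprog}(\omega_\kappa)=\rho_c(\kappa)$. This follows by factoring the defining sum $\sum_{w\in\omega_\kappa}\rho_{\pprog}(w)$: a world lies in $\omega_\kappa$ exactly when it agrees with $\kappa$ on the finitely many facts mentioned in $\kappa$ and is unconstrained on every remaining fact $f_j$; therefore the sum factorizes into $\bigl(\prod_{(f_i,1)\in\kappa}p_i\bigr)\bigl(\prod_{(f_i,0)\in\kappa}(1-p_i)\bigr)$ times a product, over the facts $f_j$ not mentioned in $\kappa$, of $(p_j+(1-p_j))=1$, leaving precisely $\rho_c(\kappa)$.

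Finally, since $K'$ is pairwise incompatible, the sets $\{\omega_\kappa\mid\kappa\in K'\}$ are pairwise disjoint with union $\omega_{K'}=\omega_K$, so by the additivity built into Definition~\ref{def:pworld-measure} and the auxiliary claim, $\mu_{\pprog}(\omega_K)=\sum_{w\in\omega_{K'}}\rho_{\pprog}(w)=\sum_{\kappa\in K'}\sum_{w\in\omega_\kappa}\rho_{\pprog}(w)=\sum_{\kappa\in K'}\rho_c(\kappa)=\mu_c(K')$. The main obstacle is the auxiliary claim $\mu_{\pprog}(\omega_\kappa)=\rho_c(\kappa)$ together with the bookkeeping that justifies the factorization over the unmentioned facts, which is exactly where finiteness of $ground(\factset)$ matters; once it is established, the disjointness argument and the appeal to Theorems~\ref{incompatible_set} and~\ref{thm:pairwise-equiv} are routine.
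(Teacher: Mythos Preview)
Your proof is correct. Note, however, that the paper does not actually supply a proof for this theorem: it is stated in the preliminaries (Section~\ref{sec:plp-queries}) as background, alongside the cited Theorems~\ref{incompatible_set} and~\ref{thm:pairwise-equiv}. The closest thing to a proof in the paper is the argument for the analogous Theorem~\ref{omega_capacity_algebra} in the \ourpabbv setting, and your approach matches that one: the paper also shows $\Omega\subseteq\mathbb{P}(W)$ trivially and, for the reverse inclusion, builds for each world $w_i$ the total composite choice (selection) $\kappa_i$ with $\omega_{\kappa_i}=\{w_i\}$, exactly as you do with $\sigma_w$.

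Where you go further than the paper is in the measure identity. The paper's proof of Theorem~\ref{omega_capacity_algebra} simply says ``it is enough to select as $K'$ the set $K$ built as above'' and stops; you instead isolate and prove the auxiliary claim $\mu_{\pprog}(\omega_\kappa)=\rho_c(\kappa)$ via the factorization over unmentioned facts, then invoke disjointness of the $\omega_\kappa$ for pairwise incompatible $K'$ to sum. This is more explicit and more general: the paper's terse argument implicitly relies on the special $K'$ consisting of selections (where each $\omega_{\kappa_i}$ is a singleton and $\rho_c(\kappa_i)=\rho_{\pprog}(w_i)$ directly), whereas your argument works for \emph{any} pairwise incompatible $K'$, which is what the statement actually claims. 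Your version is the cleaner one.
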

A composite choice $\kappa$ is an \emph{explanation} for a query $q$
if $ \forall w \in \omega_\kappa: w \models q$.
If the program is uniformly total, $w \models q$ is either true or
false for every world $w$.  Moreover, a set $K$ of composite choices
is \emph{covering} with respect to a query $q$ if every world in which
$q$ is true belongs to $\omega_K$.
Therefore, in order to compute the probability of a query $q$, we can
find a covering set $K$ of explanations for $q$, then make it mutually
incompatible and compute the probability from it. This is the approach
followed by ProbLog and PITA for example~\citep{DBLP:journals/tplp/KimmigDRCR11,RigS11a}.

\section{\ourprogs} \label{sec:blps}
  
We now present {\em \ourprogs\ (\ourpabbvs)} that extend PLPs to
include belief domains (Section~\ref{sec:belief-prelim}).
\begin{definition}[\ourpabbv rule] \label{def:calp_rule}
Let $\cD$ be a set of belief domains.  A {\em belief fact} has the
form $\mathit{belief}(D_k,B)$ where $D_k \in \cD$, and $B$ is an
element of the belief event space of $D_k$.  We call the value of the
first argument of a belief fact the {\em domain} and the value of the
second argument the {\em \belief}.\footnote{Alternatively, the belief
event could be called {\em evidence}.}

A \ourpabbv rule has the form $h
\impl l_0, \dots, l_m, p_0, \dots, p_n, b_0, \dots, b_o$ where $h$ is
an atom, each $l_i, i \in \{0, \dots, m\}$ is a literal, each $p_j, j
\in \{0, \dots, n\}$ is a probabilistic fact or its negation and each $b_k, k \in \{0, \dots,
o\}$ is a belief fact or its negation.
A negated belief fact 
$\text{\textbackslash +}\, \mathit{belief}(D,B)$ is equal to
$\mathit{belief}(D,\neg B).$
\end{definition}
Belief facts should be distinguished from the belief function {\em
  Belief/2} of Definition~\ref{def:belief-plaus}.  Note that belief
facts alone are sufficient to specify a program's use of a belief
domain, since plausibility and belief are conjuncts, so each can be
computed from the other.  We assume that all belief domains are
function-free.  To simplify presentation, we also assume belief facts
are ground.
\begin{definition}[\ourprog] \label{def:blp}
Let $\cF$ be a finite set of probabilistic facts and $\ruleset$ a
\ourpabbv ruleset.  Let $\cD$ be a finite set of belief domains and
$\cB$ the set of ground belief facts for all \beliefs of all
belief domains in $\cD$. The tuple $\pprog = (\ruleset,\cF,\cD,\cB)$
is called a {\em \ourprog}.  $\cB$ is called the {\em belief set} of
$\pprog$.
\end{definition}
As notation, for a set $\cB$ of belief facts and belief domain $D$,
$\facts(D,\cB)$ designates the set of belief facts in $\cB$ that have
domain $D$. In addition, $\doms(\cB)$ denotes the set of all domains that
are domain arguments of belief facts in $\cB$.

\noindent
When used in a belief world, a set of belief facts is made {\em \normalized} to take into account
that while belief facts from different belief domains are independent,
belief facts from the same belief domain are not.  Informally, if set of belief facts 
contains $\mathit{belief}(D_1,B_1)$ and $\mathit{belief}(D_1,B_2)$, these
belief facts are replaced by $\mathit{belief}(D_1,B_1 \cap B_2)$,
to ensure
all belief facts are independent.

\begin{definition} [\Normalization of a Set of Belief Facts] \label{def:belief-nrml}
The {\em \normalization of a set of belief facts $Bel$ } is
\[
  \mathit{\nrml}(Bel) = \{ \mathit{belief}(D,B) \mid  D \in \doms(Bel) \mbox{ and } B = \bigcap_{\mathit{belief}(D,B_i) \in Bel} B_i \}
  \]
  The set $\nrml(Bel)$ may contain facts whose \belief is $\emptyset$.  If
  $\nrml(\cB)$ contains such facts it is {\em inconsistent}, otherwise
  it is {\em consistent}.
\end{definition}
\begin{example}[\Normalization of Sets of Belief Facts]
  To further motivate
  \normalization,
consider the belief domain {\em urn1} of Example~\ref{ex:urn-domains}
and
consider a belief set containing {\em belief(urn1,\{blue\})} and {\em
  belief(urn1,\{blue,yellow\})}.  From an evidentiary perspective
(cf.~\citep{Shaf75}) the belief set contains evidence both that a blue
ball was chosen and that a blue {\em or} yellow ball was chosen.
Clearly, the evidence of a blue ball also provides evidence that a
blue {\em or} yellow ball was chosen.  \Normalization takes the
intersection of these two belief events, retaining only {\em
  belief(urn1,\{blue\})}.
Alternatively, consider a belief set $Bel_1$ that contains both {\em
  belief(urn1,\{blue\})} and {\em belief(urn1,\{red\})}.  Since a ball
drawn from an urn cannot be both blue and red, the belief of
$\mathit{belief}(urn1,(\{blue\}\cap\{red\}))$ is 0 and $\nrml(Bel_1)$ is
inconsistent.
\end{example}
\begin{definition} [Belief World]
\label{def:belief_worlds}
For a ground \ourpabbv $\pprog = \tuple{\ruleset,\factset,\cD,\cB}$, a
{\em belief world} $w = \tuple{\ruleset,\factset',\cD,\cB'}$ is such
that $\factset' \subseteq \factset$, and $\cB' \subseteq \cB$ is a
consistent \normalized belief set that contains a belief fact for every
domain in $\cD$.  $\bworlds$ denotes the set of all belief worlds.
\end{definition}
\noindent
Since $\cD$ is a finite set of function-free belief domains,
$\bworlds$ is finite.  Also, for any world
$(\ruleset,\factset',\cD,\cB') \in \bworlds$, $\cB$ contains exactly
one belief fact for every domain in $\cD$: by
Definition~\ref{def:belief_worlds}, $\cB'$ contains at least one belief
fact for a given $D_i \in \cD$, while \normalization ensures that there is
at most one belief fact for $D_i$.

As defined in Section~\ref{secLplps}, CaLPs are assumed to be
uniformly total: that each world has a two-valued well-founded model
(i.e., that each world gives rise to a stratified program).
Thus, to extend the semantics of Section~\ref{secLplps} it needs to be
ensured that the well-founded model can be properly computed when
belief facts are present, and for this the step of {\em completion} is
used.
Suppose $w$'s belief set contained {\em belief(urn1,\{blue\})} as the
(unique) belief fact for the belief domain {\em urn1} of Example~\ref{ex:urn-domains}.
Because $\mathit{belief}(urn1,\{blue\})$ provides evidence for
$\mathit{belief}(urn1,\{blue,yellow\})$, a positive literal
$\mathit{belief}(urn1,\{blue,yellow\})$ in a rule should succeed.

\begin{definition}[Well-Founded Models for \ourpabbv Programs]\label{def:belief-wfm}
Let $w = (\ruleset,\cF,\cD,\cB)$ be a belief world.  The {\em
  completion} of $\cB$, $\comp(\cB)$ is the smallest set such that
if $\mathit{belief}(\cD,B_1) \in \cB$, and $B_2$ is an element of the
belief event space
of $D$ such that $B_1 \subseteq B_2$, then $B_2 \in comp(\cB)$.
The well-founded model of $w$, $\mathit{WFM}(w)$ is the well-founded
model of $\ruleset \cup \cF \cup comp(\cB)$.  For a ground atom $q$,
$w \models q$ if $q \in \mathit{WFM}(w)$.
\end{definition}
Thus a belief set $\cB$ must be \normalized before being used to
construct a world $w$.  On the other hand, the completion of $\cB$ is used when computing
$WFM(w)$ and ensures rules requiring weak evidence (such as that a ball is
yellow or blue) will be satisfied in a belief world that provides
stronger evidence (such that a ball is blue).

Belief domains can be seen as a basis for
imprecise probability~\citep{HalF92}, so the capacity of a belief
world is
an interval containing both belief and plausibility.
Accordingly, circumflexed products and sums represent interval
multiplication and addition.  In the following definition,
$\beta_{prb}$ computes the portion of the capacity due to
probabilistic facts (similar to Definition~\ref{def:pworld-measure}),
and $\beta_{blf}$ the portion due to belief facts.
\begin{definition}[Capacity of a Belief World] \label{def:belief-world-measure}
$\beta_{prb},\beta_{blf}, \beta:\bworlds \rightarrow \realnumbers^2$
  are capacities such that for $w = (\ruleset,\cF,\cD,\cB)$ 
  \begin{equation*}
    \begin{split}
      & \beta_{prb}(w)= \widehat{\prod}_{p::a\in w}[p,p] \widehat{\prod}_{p::a\not\in w}[(1-p),(1-p)] \\
      & \beta_{blf}(w)= \widehat{\prod}_{\mathit{belief}(D,B)\in \cB} [\Belief(D,B),Plaus(D,B)] \\
      & \beta(w)= \beta_{prb}(w)\widehat{\times}\beta_{blf}(w)
    \end{split}
  \end{equation*}
\end{definition}
The restriction that a world's belief set contain a belief fact for
each belief domain need not affect the capacity of a world, since the
fact may indicate trivial evidence for the entire discernment frame,
for which the belief and plausibility are both 1.  For
instance, {\em \{blue, red, yellow\}} in the $urn1$ domain has a
belief of 1.

To assign a non-additive capacity to a {\em set} of belief worlds, the
super-additivity of belief and plausibility capacities must be
addressed~\footnote{A capacity $\xi$ is super-additive if for two sets
$A,B$ $\xi(A\cup B)$ may be greater than $\xi(A)+\xi(B)$.}. The
following definition provides one step in this process.
\begin{definition} \label{def:bdp}
  Let $\cS$ be a set of belief facts, and $D \in \doms(\cS)$.
  The {\em upper belief domain probability} is
  \begin{small}
    \begin{tabbing}          
      foooooooooooo\=fofooooooooofoooooooofoooooooooooooo\=ooo\=\kill
      $BP^{\uparrow}(D,\cS) =$
      \> $[1,1]$ \> if $\facts(D,\cS)$ is empty; and \\
      \> $[\Belief(D,\bigcup B_i)$,$\Plaus(D,\bigcup B_i)]$
      \> for $B_i \in \facts(D,\cS)$ otherwise
    \end{tabbing}
  \end{small}
\end{definition}

Next, a set $\cB$ of belief worlds is partitioned by grouping into
cells worlds in $\cB$ that have the same set of probabilistic facts.
Within each cell, the probabilistic facts are factored out and the
belief events for each domain are then unioned, leading to a single
world and unique capacity for each partition cell.  Finally the
capacities of all cells are added.  This process is necessary to avoid
counting the mass of probabilistic facts
more than once when the capacities of belief worlds are summed.

\begin{definition}[Capacities of Sets of Belief Worlds] \label{def-bel-meas-set}
  Let $\cP = (\ruleset,\cF,\cD,\cB)$.  The partition function
  $Dom{Prtn}(\cS):\mathbb{P}(\bworlds) \rightarrow
  \mathbb{P}(\mathbb{P}(\bworlds))$, given $\cS \in
  \mathbb{P}(\bworlds)$ produces the {\em domain partition} of $\cS$:
  the coarsest partition such that for each $S_i \in Dom{Prtn}(\cS)$:
  \[(\ruleset,\cF_j,\cD,\cB_j) \in S_i \mbox{ and }(\ruleset,\cF_k,\cD,\cB_k)
  \in S_i \Rightarrow \cF_j = \cF_k.\]
  Using the domain partition, we define
  $probfacts(S_i)$ as the unique $\cF'$ for any $w =
  (\ruleset,\cF',\cD,\cB')$ s.t., $w \in S_i$.  The capacity
  $\beta_{prtn}$ for $S_i \in Dom{Prtn}(\cS)$ is
  \begin{equation}
  \bworldmeas_{prtn}(S_i) = \beta_{prb}(probfacts(S_i)) \widehat{\times}
  \widehat{\prod}_{(\ruleset,\cF,\cD,\cB) \in \cS_i;D
    \in doms(\cB)}BP^{\uparrow}(D,\cB).
  \end{equation}
  Finally, the capacity for $\cS$ is
  \begin{equation}
  \bmu(\cS) = \widehat{\sum}_{S_i \in Dom{Prtn}(\cS)} \bworldmeas_{prtn}(S_i).
  \end{equation}\label{eq:bprtn}
\end{definition}

\begin{example} [Computing the Capacity for a Set of Belief Worlds]\label{ex:urn2}
  Consider a new domain $urn2$ analogous to $urn1$ of
  Example~\ref{ex:urn-domains}:
    \begin{it} 
\begin{tabbing}
       foooooooooooofofoooooooooooo\=foooooooofooooooooooooooooo\=ooo\=\kill
       domain(urn2,\{green,orange,purple\}) \\
       mass(urn2,\{green\},0.1)\>mass(urn2,\{orange\},0.3)\>mass(urn2,\{green,purple\},0.6)
     \end{tabbing}
    \end{it}

  Let \ourpabbv $\cP = (\ruleset,\cF,\cD,\cB)$ where $\cF =
  \{p_1::f_1,p_2::f_2\}$ and $\cD = \{urn1,urn2\}$, along with a set $\cS$ of
  3 worlds of $\cP$ with the following fact sets and belief domains.
\begin{small}
  \begin{tabbing}
   foooooooofooooooooofoooooooofoooooooooooooo\=ooo\=\kill
   $w_1: \cF_1 = \{p_1\}, \cB_1= \{\mathit{belief}(urn1,\{blue\})\}$ \>  $w_2: \cF_2 = \{p_2\}, \cB_2= \{\mathit{belief}(urn2,green)\}$\\
        $w_3: \cF_3 = \{p_1\}, \cB_3= \{\mathit{belief}(urn1,\{yellow\})\}$
  \end{tabbing}
\end{small}
Following Definition~\ref{def-bel-meas-set}, $DomPrtn(\cS$) partitions
$\cS$ into $\{w_1,w_3\}$ and $\{w_2\}$.  To compute
$\beta_{prtn}(\{w_1,w_3\})$, $BP^{\uparrow}$ is calculated for each
belief domain giving 
\begin{small}
  \[[\mathit{Belief}(urn1,\{blue,yellow\}),\mathit{Plaus}(urn1,\{blue,yellow\}])= [0.7,0.7]\]
\end{small}
for the domain {\em urn1} and $[1,1]$ for the domain $urn2$.
Next, $\beta_{prb}(\{w_1,w_3\})$ is calculated as
$[p_1,p_1]\widehat{\times}[(1-p_2),(1-p_2)]$ and multiplied with
$[0.7,0.7]$.  The result is then summed with $\beta_{prtn}(\{w_2\})$.

\end{example}

One might ask why Definition~\ref{def-bel-meas-set} uses
$BP^{\uparrow}$ to combine beliefs of the same domain rather than
Dempster's Rule of Combination.  Dempster's rule was designed to
combine independent evidence to determine a precise combined belief.
This approach is useful for some purposes but has been criticized for
many others (cf.~\citep{Josa16}).  When combining \ourpabbv worlds,
epistemic commitment for a given domain is reduced by $BP^{\uparrow}$
in a manner similar to combining \ourpabbv explanations
(Section~\ref{sec:blp-queries}).

 Since the range of $\bmu$ is $\mathbb{R}^2$, we designate $proj_{B},
 proj_{P}:\mathbb{R}^2 \rightarrow \mathbb{R}$ as projection functions
 for the belief and plausibility portions respectively.
The following theorem states that capacity spaces formed using belief
worlds and the capacity function $\bmu$ have belief and plausibility
capacities whose value is 1 for the entire belief event space.
\begin{theorem}
  Let $\cP = (\ruleset,\factset,\cD,\cB)$ be a \ourpabbv.
  Then, 
  $(\bworlds,\mathbb{P}(\bworlds),proj_{B}(\bmu))$ and
  $(\bworlds,\mathbb{P}(\bworlds),proj_{P}(\bmu))$ are normalized
  capacity spaces.
\end{theorem}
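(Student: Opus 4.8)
The plan is to verify the defining clauses of Definition~\ref{def:capacity} for each of $\xi = proj_{B}(\bmu)$ and $\xi = proj_{P}(\bmu)$, taking as frame of discernment $\bworlds$, which is finite and non-empty (for instance the belief world with $\factset' = \emptyset$ that selects $\cX_D$ for every $D \in \cD$): namely that $\xi$ maps into $[0,1]$ with $\xi(\emptyset) = 0$, that $\xi$ is monotone, and that $\xi(\bworlds) = 1$. Since $proj_{B}$ and $proj_{P}$ distribute over the interval sum $\widehat{\sum}$, I would argue directly on the interval-valued $\bmu$, proving each relation componentwise and reading the two endpoints off at the end. For the base value, $DomPrtn(\emptyset) = \emptyset$, so $\bmu(\emptyset)$ is the empty $\widehat{\sum}$, i.e.\ $[0,0]$, whence $\xi(\emptyset) = 0$.

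\textbf{Normalization ($\xi(\bworlds) = 1$).} Here $DomPrtn(\bworlds)$ has exactly one cell $S_{\factset'}$ per $\factset' \subseteq \factset$, namely all belief worlds whose probabilistic-fact set is $\factset'$. The crucial observation is that each such cell saturates every belief domain: for any $D \in \cD$, the belief world $(\ruleset,\factset',\cD,\cB')$ that selects $\cX_{D''}$ for every $D'' \in \cD$ belongs to $S_{\factset'}$, so $\mathit{belief}(D,\cX_D)$ occurs among the accumulated belief facts of $S_{\factset'}$ and hence the union $\bigcup B_i$ in $BP^{\uparrow}(D,\cdot)$ equals $\cX_D$. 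Since each belief domain's mass function sums to $1$, $\Belief(D,\cX_D) = 1$ and $\Plaus(D,\cX_D) = 1 - \Belief(D,\emptyset) = 1$, so every $BP^{\uparrow}$ factor is $[1,1]$ and $\bworldmeas_{prtn}(S_{\factset'}) = \beta_{prb}(\factset')$, a degenerate interval $[\pi(\factset'),\pi(\factset')]$ with $\pi(\factset') = \prod_{p::a\in\factset'}p \prod_{p::a\notin\factset'}(1-p)$. Therefore
\[
\bmu(\bworlds)=\widehat{\sum}_{\factset'\subseteq\factset}\bigl[\pi(\factset'),\pi(\factset')\bigr]=[1,1],
\]
because $\sum_{\factset'\subseteq\factset}\pi(\factset') = 1$ --- which is precisely the fact that $\worldmeas$ of Definition~\ref{def:pworld-measure} is a probability measure on $W_{\pprog}$ (equivalently, expand $\prod_{p::a\in\factset}(p+(1-p)) = 1$). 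Hence $\xi(\bworlds) = 1$.

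\textbf{Monotonicity.} Let $\cA \subseteq \cB \subseteq \bworlds$. Because $DomPrtn$ groups worlds by their probabilistic-fact set, each cell $S$ of $DomPrtn(\cA)$, with some label $\factset'$, is contained in the cell $T$ of $DomPrtn(\cB)$ of the same label, while $DomPrtn(\cB)$ may contain further cells. Two observations close the argument. First, every belief world commits to exactly one belief fact per domain, so for any non-empty world set the index $D \in \doms(\cdot)$ in $\bworldmeas_{prtn}$ runs over all of $\cD$ and the first ("empty") branch of $BP^{\uparrow}$ never fires; consequently passing from $S$ to $T$ only enlarges the accumulated belief facts, hence the unions $\bigcup B_i$, hence --- by monotonicity of the capacities $\Belief$ and $\Plaus$ --- each factor $BP^{\uparrow}(D,\cdot)$ grows componentwise, and since $\beta_{prb}(\factset')$ is shared and all intervals are non-negative, $\widehat{\times}$ is componentwise monotone, giving $\bworldmeas_{prtn}(S) \le \bworldmeas_{prtn}(T)$ componentwise. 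Second, every $\bworldmeas_{prtn}(\cdot)$ is a $\widehat{\prod}$ of non-negative intervals, hence $\ge [0,0]$, so the extra cells of $DomPrtn(\cB)$ add only non-negative amounts. As $\widehat{\sum}$ is componentwise monotone on non-negative intervals, summing yields $\bmu(\cA) \le \bmu(\cB)$ componentwise, i.e.\ $\xi(\cA) \le \xi(\cB)$.

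\textbf{Conclusion and main obstacle.} Monotonicity together with $\bmu(\emptyset) = [0,0]$ and $\bmu(\bworlds) = [1,1]$ forces $0 \le \xi(\cS) \le 1$ for all $\cS$, so $\xi$ has codomain $[0,1]$; with $\xi(\emptyset) = 0$ and monotonicity it is a capacity, and with $\xi(\bworlds) = 1$ a normalized one, for both $proj_{B}$ and $proj_{P}$, establishing the theorem. I expect the main obstacle to be the bookkeeping around $DomPrtn$ in the monotonicity step: justifying that enlarging the world set can only enlarge the per-domain unions inside a matched partition cell --- so one never replaces a default $[1,1]$ factor by a strictly smaller $BP^{\uparrow}$ interval --- which relies on cells being non-empty and on every belief world already carrying one belief fact per domain, together with confirming that the circumflexed interval product and sum are monotone on the non-negative orthant.
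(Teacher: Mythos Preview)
Your proof is correct and, on the normalization clause, follows essentially the same argument as the paper: partition $\bworlds$ by probabilistic-fact set, observe each cell saturates every belief domain so every $BP^{\uparrow}$ factor collapses to $[1,1]$, and then sum the remaining $\beta_{prb}$ terms to $1$. Where you go beyond the paper is the monotonicity argument: the paper simply asserts that ``from Definitions~\ref{def:capacity} and~\ref{def-bel-meas-set} it is clear that $\bmu$ is a capacity'', whereas you actually supply the verification---matching cells by probabilistic-fact label, invoking the monotonicity of $\Belief$ and $\Plaus$ on enlarged unions, and handling the extra cells as non-negative contributions. Your careful observation that every belief world carries one belief fact per domain (so the default $[1,1]$ branch of $BP^{\uparrow}$ is never replaced by a smaller interval as the world set grows) is exactly the point one needs here, and it is not made explicit in the paper's proof.
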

\begin{proof}
   From Definitions \ref{def:capacity} and \ref{def-bel-meas-set} it
   is clear that $\bmu$ is a capacity.  To show that $\bmu$ is
   normalized, consider that $\bmu$ is a function composition, where
   the first function, $DomPrtn(\bworlds)$ returns the domain
   partition
   of $\bworlds$, while $\beta_{prb}(probfacts(S_i))$ is the separate
   probability of $\cF$.  Consider a cell $\cS_i \in DomPrtn(\cS)$.
   From Eq. (4)
  \[\bworldmeas_{prtn}(S_i) = \beta_{prb}(probfacts(S_i)) \widehat{\times}
  \widehat{\prod}_{(\ruleset,\cF,\cB) \in \cS_i;D \in doms(\cB)}BP^{\uparrow}(\cD,B)\]
   Every world in $\cS_i$ contains the same set of probabilistic
   facts, $(probfacts(S_i))$, but differs in belief facts.  Let $D$ be
   a belief domain in $\cD$.  Since $\cS_i$ is a partition of
   $\bworlds$, and because all belief domains in all worlds are
   normalized, $S_i$ contains a world with every \belief of $D$.
   Thus $BP^{\uparrow}(D,B) = [1,1]$
   so that
   \[\bworldmeas_{prtn}(S_i) = \beta_{prb}(probfacts(S_i)) \widehat{\times}
   \widehat{\prod}_{(\ruleset,\cF,\cB) \in \cS_i;D \in doms(\cB)}[1,1]
   = \beta_{prb}(probfacts(S_i)) \]  
     Thus, the value of $\bworldmeas_{prtn}(\cS_i)$ relies only on the
     probabilistic facts that were used for the partition.  Since
     $\bworlds$ contains all possible worlds, $\bmu(\bworlds)$ reduces
     to the sum of all subsets of probabilistic facts of $\cP$, which
     is 1.
\end{proof}

\begin{definition}[cf. Section~\ref{secLplps}, Definition~\ref{def:indicator_plp}]
  \label{def:indicator}
        Given a ground atom $q$ in $ground(\pprog)$
	we define  $Q_{\cB}:  \bworlds\to\{0,1\}$ as
        \begin{small}
	\begin{equation}
	  Q_{\cB}(w)=\left\{\begin{array}{ll}1 & \mbox{if }w\vDash q\ 
          (\mathit{Definition}~\ref{def:belief-wfm})\\
			0     & \mbox{otherwise}
		\end{array}\right.\label{random_var_finite_case}
	\end{equation}
        \end{small}
\end{definition}
By definition
$Q_{\cB}^{-1}(\gamma)\in \mathbb{P}(\bworlds)$
for $\gamma\subseteq \{0,1\}$, and since the range of $Q_{\cB}$ is
measurable, $Q_{\cB}$ is a capacity-based random variable.

Given the above discussion, we compute $[\mathit{belief}(q),\mathit{plaus}(q)]$ for a
ground atom $q$ as
\begin{small}
  \[[\mathit{belief}(q),\mathit{plaus}(q)] = \bmu(Q_{\cB}^{-1}(1))=\bmu(\{w \mid w\in\bworlds, w\models q\}).\]
\end{small}

\section{Computing Queries to \ourprogs} \label{sec:blp-queries}

We extend the definition of atomic choice from
Section~\ref{sec:plp-queries} to \ourpabbvs as follows.
\begin{definition}[Capacity Atomic Choice (\ourpabbvs)]
  For a \ourpabbv $\bprogram= (\ruleset,\cF,\cD,\cB)$ a
  \textit{capacity atomic choice} is either:
  \begin{itemize}
  \item An {\em atomic Bayesian choice} of ground probabilistic fact $p :: f$,
    represented by the pair $(f,k)$ where $k \in \{0,1\}$. $k = 1$
    indicates that $f$ is selected, $k = 0$ that it is not.
    
  \item An {\em atomic belief choice} of a ground belief fact {\em
    belief($D$,B)} where $D$ is a belief domain in $\cD$ and $B$ is a
    \belief in $D$.  The choice is represented as {\em belief(D,B)}.
  \end{itemize}
\end{definition}
Atomic belief choices with different belief domains are considered
  independent choices, much as atomic Bayesian choices.  However,
  atomic belief choices with the same belief domain are {\em not}
  independent: so fitting them into the distribution semantics
  requires care.

Given a set $\cS$ of atomic choices, let $Bay(\cS)$ be the subset of
$\cS$ that contains only the atomic Bayesian choices of $\cS$ and let
$Bel(\cS)$ be the subset of $\cS$ that contains only the atomic belief
choices of $\cS$. Note that $Bel(\cS)$ is a set of belief facts,
so that definitions of Section~\ref{sec:blps} can be used directly
or with minimal change. Then $\cS$ is consistent if both $Bay(\cS)$
and $Bel(\cS)$ are consistent (cf. Definition~\ref{def:belief-nrml}).
A
 {\em capacity composite choice $\kappa$} is a consistent set of
atomic choices for a \ourpabbv $\bprogram= (\cP,\factset,\cD,\cB)$.
Given a capacity composite choice $\kappa$, let $doms(\kappa)$ be
$\{D \mid \mathit{belief}(D,B)\in \kappa\}$.
\begin{definition}[Measure for a Single Capacity Composite Choice]
  Given a capacity composite choice $\kappa$  we define  $\rho^{Comp}_{\cB}$ as
  \begin{small}
  \[
 \rho^{Comp}_{\cB}(\kappa) = \rho_{\bprogram}(\kappa) \widehat{\times}
  \widehat{\prod}_{\mathit{belief}(D,Elt) \in \nrml(Bel(\kappa));\cD \in doms(\kappa)}BP^{\uparrow}(D,Elt )
  \]
    \end{small}
    where $\rho_{\bprogram}$ is the measure for
     composite choices from Definition~\ref{def:comp-choice-measure}.
\end{definition}
     
The \emph{set of worlds $\omega_\kappa$ compatible with a capacity composite
choice} $\kappa$ is
\begin{small}
$$\omega_\kappa=\{w_{\sigma} = (\ruleset,\factset,\cD,\cB) \in \bworlds
  \mid Bay(\kappa)\subseteq \cF\mbox{ and } \nrml(Bel(\kappa ))\subseteq\cB\}.$$
  \end{small}
With this definition of worlds compatible with a capacity composite choice, the
definitions of equivalence of two sets of capacity composite choices,
of incompatibility of two atomic choices and of a pairwise
incompatible set of capacity composite choices are the same as for
probabilistic logic programs.
        
If $K$ is a pairwise incompatible set 
of capacity composite choices, define  $\xi_c(K)=\sum_{\kappa\in K}\rho_{\cB}^{Comp}(\kappa)$.
Given a general set $K$ of capacity composite choices, we can
construct a pairwise incompatible equivalent set through the
technique of \emph{splitting}.  In detail, (1) if $\f$ is a
probabilistic fact and $\kappa$ is a capacity composite choice that
does not contain an atomic Bayesian choice $(\f,k)$ for any
$k$, the \emph{split} of $\kappa$ on $\f$, $S_{\kappa,\f}$, is
defined as for probabilistic logic programs;  (2) if $D$ is a
domain not appearing in $doms(\kappa)$ and $B$
a \belief of $D$, the \emph{split} of
$\kappa$ on $\mathit{belief}(D,B)$, is defined as 
the set of capacity composite
choices
$S_{\kappa,D,B}=\{\kappa\cup\{\mathit{belief}(D,B)\},\kappa\cup\{\mathit{belief}(D,\neg
B)\}\}$.  In this way, $\kappa$ and $S_{\kappa,\f}$
($S_{\kappa,D,B}$) identify the same set of possible worlds,
i.e., $\omega_\kappa=\omega_{S_{\kappa,\f}}$
($\omega_\kappa=\omega_{S_{\kappa,D,B}}$)  and $S_{\kappa,\f}$ ($S_{\kappa,D,B}$)
are pairwise incompatible.
      
As for probabilistic logic programs, given a set of capacity composite choices, by
repeatedly applying splitting it is possible to obtain an
equivalent mutually incompatible set of composite
choices~\citep{DBLP:journals/jlp/Poole00} and the analogous of Theorem~\ref{incompatible_set}
can be proved.
      
\begin{theorem}
  \label{incompatible_capacity_set}
  Let $K$ be a finite set of capacity composite choices.  Then there is
  a pairwise incompatible set of capacity composite choices equivalent
  to $K$.
\end{theorem}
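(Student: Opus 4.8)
The plan is to follow the splitting argument behind Theorem~\ref{incompatible_set}, the only genuinely new ingredient being the second form of split --- the split $S_{\kappa,D,B}$ of a composite choice on a belief fact $\mathit{belief}(D,B)$ of a domain $D$ not appearing in $doms(\kappa)$. Two properties recorded in the discussion preceding the statement will be used throughout: for either kind of split, (i) the split preserves the compatible worlds, $\omega_\kappa=\omega_{S_{\kappa,\f}}$ and $\omega_\kappa=\omega_{S_{\kappa,D,B}}$, so replacing a member $\kappa$ of a set $K$ of capacity composite choices by the members of its split leaves $\omega_K$ unchanged; and (ii) the two members of a split are mutually incompatible. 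Since $\omega_\kappa$ depends on $\kappa$ only through $Bay(\kappa)$ and $\nrml(Bel(\kappa))$, I would first replace each $\kappa\in K$ by $Bay(\kappa)\cup\nrml(Bel(\kappa))$ --- which leaves $\omega_K$ unchanged --- so that thereafter every composite choice carries at most one belief fact per domain.

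The construction is then the familiar iterative one. Starting from $K$, while the current set $K'$ is not pairwise incompatible, choose distinct $\kappa_1,\kappa_2\in K'$ whose union is consistent. Since $\kappa_1\neq\kappa_2$, some atomic choice $c$ occurs in one of them, say $\kappa_1$, but not the other, and consistency of $\kappa_1\cup\kappa_2$ forbids $\kappa_2$ from containing an atomic choice conflicting with $c$. If $c=(\f,k)$, then $\kappa_2$ decides nothing about $\f$, so I would replace $\kappa_2$ by $S_{\kappa_2,\f}$; the member holding $(\f,1-k)$ is then incompatible with $\kappa_1$. If $c=\mathit{belief}(D,B)$ and $D\notin doms(\kappa_2)$, replace $\kappa_2$ by $S_{\kappa_2,D,B}$; in the union of the member holding $\mathit{belief}(D,\neg B)$ with $\kappa_1$, applying $\nrml$ to the two belief facts for $D$ yields $\mathit{belief}(D,B\cap\neg B)=\mathit{belief}(D,\emptyset)$, so that member is incompatible with $\kappa_1$. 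By (i) each such replacement keeps $\omega_{K'}=\omega_K$, so once the loop stops the resulting set is pairwise incompatible and equivalent to $K$.

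Two points then remain, and this is where the argument must be done with care: (a) termination, and (b) ensuring that a \emph{legal} split is always available. For (a), every capacity composite choice extends to a unique ``full'' one that decides every probabilistic fact of $\cF$ and names one belief fact for each domain of $\cD$; since $\cF$ is finite and every belief domain is function-free there are only finitely many such, and each split strictly increases the number of atomic choices toward that bound. Attaching to $K'$ the finite multiset recording, for each member, the number of facts of $\cF$ it leaves undecided plus the number of domains of $\cD$ it omits, each step replaces one entry by two strictly smaller ones and so strictly decreases this multiset in the well-founded multiset order; hence the loop halts. Point (b) is the one I expect to be the real obstacle: the awkward configuration is the one in which every disagreement between $\kappa_1$ and $\kappa_2$ concerns a belief fact for a domain that both already mention, so that no $S_{\kappa_2,D,B}$ applies. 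Showing that this configuration cannot occur among compatible canonical composite choices --- or else handling it by an additional refinement step --- requires a closer look at the interplay of $\nrml$, $\comp$ and the definition of $\omega_\kappa$; with that settled and the multiset measure above in place, the remainder is routine bookkeeping mirroring the proof of Theorem~\ref{incompatible_set}.
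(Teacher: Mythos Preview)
Your approach is essentially the paper's: iterated splitting until no two compatible choices remain.  Two differences are worth flagging.  First, the paper uses an explicit \emph{remove dominated elements} step (if $\kappa_1\subset\kappa_2$, delete $\kappa_2$) alongside splitting; you replace this by an initial canonicalisation pass and then rely on splitting alone.  Both variants work for termination, and your multiset argument is considerably more careful than the paper's one-line ``terminates because $K$ is finite''.  Second --- and more interesting --- the concern you raise in point~(b) is real, and the paper's proof does \emph{not} address it either: the paper simply asserts that when $\kappa_1,\kappa_2$ are compatible and neither dominates the other, one can pick a Bayesian or belief choice in $\kappa_1\setminus\kappa_2$ and split $\kappa_2$ on it, without checking that the belief split is legal (i.e.\ that $D\notin doms(\kappa_2)$).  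Your example-in-spirit, $\kappa_1=\{\mathit{belief}(D,B_1)\}$ and $\kappa_2=\{\mathit{belief}(D,B_2)\}$ with $B_1\cap B_2\neq\emptyset$ and $B_1\neq B_2$, is a configuration on which neither the paper's nor your procedure makes progress as written.  So you have not missed an idea the paper supplies; rather, you have surfaced a gap that the paper's own proof sketch glosses over.
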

\begin{proof}
Given a  set of capacity composite choices $K$, there are
two possibilities to form a new set $K'$ of capacity composite choices so that
$K$ and $K'$ are equivalent:
\begin{enumerate}
\item \textbf{Removing dominated elements}: if $\kappa_1,\kappa_2\in K$ and $\kappa_1 \subset \kappa_2$, let $K' = K\setminus\{\kappa_2\}$.
\item \textbf{Splitting elements}: if $\kappa_1,\kappa_2\in K$ are compatible (and neither
is a superset of the other), there is a Bayesian choice $(\f,k)\in \kappa_1 \setminus \kappa_2$ or a belief choice $\mathit{belief}(D,B)\in \kappa_1 \setminus \kappa_2$ 
We replace $\kappa_2$ by the split of $\kappa_2$ on $\f$. Let $K' = K \setminus \{\kappa_2\} \cup S_{\kappa_2,\f}$.
\end{enumerate}
In both cases, $\omega_K=\omega_{K'}$.
If we repeat this two operations until neither is applicable, we obtain a splitting algorithm that
terminates because $K$ is finite. The resulting set $K'$ is pairwise incompatible and is equivalent to the original set.
\end{proof}

\begin{theorem}[Capacity Equivalence for Capacity Composite Choices]
	\label{thm:pairwise-capacity-equiv}
	If $K_1$ and $K_2$ are both pairwise incompatible  sets of capacity composite choices such that they are equivalent, then $\xi_c(K_1)=\xi_c(K_2)$.
\end{theorem}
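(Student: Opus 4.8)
The plan is to mimic the proof of the classical measure–equivalence result (Theorem~\ref{thm:pairwise-equiv}) but carry the extra bookkeeping needed for belief domains. The key structural fact is that $\xi_c$ is defined additively on a pairwise incompatible set, exactly as $\mu_c$ is, so it suffices to show that $\xi_c$ is invariant under the two elementary operations that generate equivalent pairwise incompatible sets: (i) splitting a capacity composite choice $\kappa$ on a fresh probabilistic fact $\f$, and (ii) splitting $\kappa$ on a belief fact $\mathit{belief}(D,B)$ for a domain $D \notin doms(\kappa)$. By Theorem~\ref{incompatible_capacity_set} any two equivalent pairwise incompatible sets can be reached from a common refinement via such splits (together with removal of dominated elements, which between pairwise incompatible sets only removes elements equal to others), so local invariance under splitting plus the additivity of $\xi_c$ gives the global statement.

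First I would establish the Bayesian split invariance: for a fresh probabilistic fact $\f$ with probability $p$, $\rho^{Comp}_{\cB}(\kappa) = \rho^{Comp}_{\cB}(\kappa \cup \{(\f,1)\}) + \rho^{Comp}_{\cB}(\kappa \cup \{(\f,0)\})$. This follows because $\rho^{Comp}_{\cB}$ factors as $\rho_{\bprogram}(\kappa)$ (the purely Bayesian part) times the belief product $\widehat{\prod}_D BP^{\uparrow}(D,\cdot)$, and the split changes only the Bayesian factor, where $\rho_{\bprogram}(\kappa) = p\cdot\rho_{\bprogram}(\kappa) + (1-p)\cdot\rho_{\bprogram}(\kappa)$ exactly as in Definition~\ref{def:comp-choice-measure}; the belief product is unaffected since $doms(\kappa)$ is unchanged. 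Second, the belief split invariance: splitting on $\mathit{belief}(D,B)$ for a fresh domain $D$ yields $\kappa \cup \{\mathit{belief}(D,B)\}$ and $\kappa \cup \{\mathit{belief}(D,\neg B)\}$, and I need $\rho^{Comp}_{\cB}(\kappa) = \rho^{Comp}_{\cB}(\kappa\cup\{\mathit{belief}(D,B)\}) + \rho^{Comp}_{\cB}(\kappa\cup\{\mathit{belief}(D,\neg B)\})$. Here the Bayesian factor $\rho_{\bprogram}$ is unchanged, while the belief product gains a new factor $BP^{\uparrow}(D,\cdot)$ for the domain $D$: on the left $D \notin doms(\kappa)$ so (by Definition~\ref{def:bdp}, the empty-$\facts$ case) the implicit factor for $D$ is $[1,1]$; on the right the factors are $[\Belief(D,B),\Plaus(D,B)]$ and $[\Belief(D,\neg B),\Plaus(D,\neg B)]$. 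Componentwise this reduces to $\Belief(D,B) + \Belief(D,\neg B) = 1$ and $\Plaus(D,B) + \Plaus(D,\neg B) = 1$, which are immediate from $\Plaus(D,X) = 1 - \Belief(D,\neg X)$ (Eq.~\eqref{eq:plaus}) and its conjunct.

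Having the two local identities, the argument closes as in the classical case: take $K_1, K_2$ equivalent and pairwise incompatible; by Theorem~\ref{incompatible_capacity_set} (applied to $K_1 \cup K_2$, or by the splitting/domination procedure used in its proof) each can be refined by a finite sequence of splits to a common pairwise incompatible set $K^*$ with $\omega_{K^*} = \omega_{K_1} = \omega_{K_2}$; each split preserves $\xi_c$ by the two identities above and additivity over the set, and dominated-element removal between pairwise incompatible sets only deletes duplicates (mass $0$ contribution beyond the retained copy), so $\xi_c(K_1) = \xi_c(K^*) = \xi_c(K_2)$.

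The main obstacle I anticipate is not the algebra — the componentwise identities are one line each — but justifying that the refinement to a common $K^*$ genuinely exists and that every step along the way is one of the two sanctioned splits. In the probabilistic setting this is exactly the content used in Poole's proof; here one must additionally check that splitting on a belief fact for domain $D$ only ever happens when $D \notin doms(\kappa)$ (otherwise $\nrml$ would merge the events and the split identity above would not apply verbatim), and that the normalization $\nrml(Bel(\kappa))$ interacts correctly with $\omega_\kappa$ so that $\omega_\kappa = \omega_{S_{\kappa,D,B}}$ really holds. These are precisely the compatibility properties asserted (without full proof) in the paragraph preceding Theorem~\ref{incompatible_capacity_set}, so I would either invoke them directly or verify them as a short lemma; once they are in hand, the capacity-equivalence theorem follows by the template above.
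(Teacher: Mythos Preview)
Your overall strategy coincides with the paper's: both reduce the theorem to Poole's classical argument, with the only new ingredient being that the two halves of a belief split, $\mathit{belief}(D,B)$ and $\mathit{belief}(D,\neg B)$, contribute capacities summing to $[1,1]$. Your explicit decomposition into Bayesian-split invariance, belief-split invariance, and a common-refinement argument is exactly the skeleton the paper gestures at with ``analogous to that of Theorem~\ref{thm:pairwise-equiv}.''

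There is, however, a concrete gap in your justification of the belief-split identity. You need $\Belief(D,B)+\Belief(D,\neg B)=1$ and $\Plaus(D,B)+\Plaus(D,\neg B)=1$, and you claim these are ``immediate from $\Plaus(D,X)=1-\Belief(D,\neg X)$ and its conjunct.'' But conjugacy only gives the \emph{cross} identities $\Belief(D,B)+\Plaus(D,\neg B)=1$ and $\Plaus(D,B)+\Belief(D,\neg B)=1$; it does not yield $\Belief(D,B)+\Belief(D,\neg B)=1$, which for an arbitrary belief function can be strictly less than $1$. The paper does not appeal to conjugacy here: it argues directly from the mass function, asserting that the collections $\{X\mid X\subseteq B\}$ and $\{X\mid X\subseteq\neg B\}$ partition the focal sets, so that $\Belief(D,B)=\sum_{X\subseteq B}\mass(D,X)$ and $\Belief(D,\neg B)=\sum_{X\subseteq\neg B}\mass(D,X)$ together account for all the mass and hence sum to $1$ (and ``for plausibility the reasoning is similar''). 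To align with the paper you should replace the conjugacy step with this mass-partition argument.
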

\begin{proof}
The proof of this theorem is analogous to that of Theorem \ref{thm:pairwise-equiv} given by~\cite{DBLP:journals/ai/Poole93}. That proof hinges on the fact that  the probabilities of $(\f_i,0)$ and $(\f_i,1)$ sum to 1.
We must similarly prove that the belief and plausibility of $\mathit{belief}(D,B)$ and $\mathit{belief}(D,\neg B)$ sum to 1.
Let us prove it for the belief: the first component of $\rho^{Comp}_{\cB}(\{\mathit{belief}(D,B)\})$ is $\mathit{Belief}(D,B)=\sum_{X\subseteq B}mass(D,X)$
while of $\rho^{Comp}_{\cB}(\{\mathit{belief}(D,\neg B)\})$ is $\mathit{Belief}(D,\neg B)=\sum_{X\subseteq \neg B}mass(D,X)$. Clearly the two sets $\{X \mid X\subseteq B\}$ and
$\{X \mid X\subseteq \neg B\}$ partition the set of focal sets and, since the masses of focal sets sums to 1, $\mathit{Belief}(D,B)+\mathit{Belief}(D,\neg B)=1$.
For plausibility the reasoning is similar.
\end{proof}

\begin{theorem}[Normalized Capacity Space of a Program]
\label{omega_capacity_algebra}
Let $\bprogram= (\ruleset,\factset,\cD,\cB)$ be a \ourpabbv  without function symbols and let $\Omega_{\bprogram}=\{\omega_K \mid K\mbox{ is a set of capacity composite choices}\}$.
Then $\Omega_{\bprogram}=\mathbb{P}(\bworlds)$ and
$\bmu(\omega_K)=\xi_c(K')$ where $K'$ is a pairwise incompatible set of capacity composite choices equivalent to $K$.
\end{theorem}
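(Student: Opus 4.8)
The plan is to adapt the proof of Theorem~\ref{omega_algebra} to capacities, treating the two assertions separately. The set-theoretic inclusion $\Omega_{\bprogram} \subseteq \mathbb{P}(\bworlds)$ is immediate, since every $\omega_K$ is by construction a set of belief worlds. For the reverse inclusion I would, given $\cS \in \mathbb{P}(\bworlds)$, attach to each $w = (\ruleset,\factset_w,\cD,\cB_w) \in \cS$ the \emph{total} capacity composite choice $\sigma_w$ consisting of the atomic Bayesian choices $(f,1)$ for $f \in \factset_w$ and $(f,0)$ for $f \in \factset \setminus \factset_w$, together with the (unique, already \normalized) belief fact of $\cB_w$ for each domain. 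Then $\omega_{\sigma_w} = \{w\}$: a compatible world must have probabilistic-fact set $\factset_w$ because $Bay(\sigma_w)$ is total, and its belief set must contain $\nrml(Bel(\sigma_w)) = \cB_w$, which, since a belief world carries exactly one belief fact per domain (Definition~\ref{def:belief_worlds}), forces it to equal $\cB_w$. Hence $K = \{\sigma_w \mid w \in \cS\}$ has $\omega_K = \cS$, so $\cS \in \Omega_{\bprogram}$ and the two collections coincide.

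For the quantitative identity, Theorem~\ref{incompatible_capacity_set} supplies a pairwise incompatible $K'$ equivalent to $K$, and by Theorem~\ref{thm:pairwise-capacity-equiv} the value $\xi_c(K')$ does not depend on which such $K'$ is taken; since $\omega_{K'} = \omega_K$, it suffices to verify $\bmu(\omega_{K'}) = \xi_c(K')$ for one convenient pairwise incompatible $K'$. First I would refine $K'$ so every element has a total Bayesian part, by repeatedly splitting on unmentioned probabilistic facts; this preserves equivalence, pairwise incompatibility, and $\xi_c$, because $\rho^{Comp}_{\cB}(\kappa\cup\{(\f,0)\}) \widehat{+} \rho^{Comp}_{\cB}(\kappa\cup\{(\f,1)\}) = \rho^{Comp}_{\cB}(\kappa)$ (the two Bayesian alternatives contributing $[1-p,1-p]\widehat{+}[p,p] = [1,1]$). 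Grouping the refined $K'$ by its common Bayesian part partitions $\omega_{K'}$ into the classes of worlds sharing a probabilistic-fact set, which is precisely $DomPrtn(\omega_{K'})$ of Definition~\ref{def-bel-meas-set}, with $probfacts$ of a class equal to that shared fact set $\factset'$. After pulling the degenerate interval $\beta_{prb}(\factset')$ out of the circumflexed sum, the whole claim reduces, cell by cell, to the identity
\[ \widehat{\prod}_{D \in \cD} BP^{\uparrow}\!\bigl(D,\, \bigcup\nolimits_{w \in \omega_{K'_{\factset'}}} \cB_w\bigr) \;=\; \widehat{\sum}_{\kappa \in K'_{\factset'}} \widehat{\prod}_{D \in doms(\kappa)} BP^{\uparrow}(D, B^{\kappa}_D), \]
that is, the epistemic-commitment reduction performed once and globally by unioning the belief events of each domain over a cell (Definition~\ref{def-bel-meas-set}) yields the same capacity interval as summing the per-composite-choice $BP^{\uparrow}$ products of the choices covering that cell.

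This cell-wise identity is the heart of the proof and the step I expect to be the main obstacle, precisely because belief and plausibility are super-additive, so the two aggregations only coincide once the belief events within a cell have been ``unioned back'' far enough. I would prove it by induction on the number of belief-domain splits separating $K'_{\factset'}$ from the family $\{\sigma_w \mid w \in \omega_{K'_{\factset'}}\}$ of total capacity composite choices. The base case is the single-world identity $\beta(w) = \rho^{Comp}_{\cB}(\sigma_w)$, obtained by unwinding Definition~\ref{def:belief-world-measure} and the definition of $\rho^{Comp}_{\cB}$ and using $BP^{\uparrow}(D,\{\mathit{belief}(D,B)\}) = [\Belief(D,B),\Plaus(D,B)]$ (Definition~\ref{def:bdp}). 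The inductive step uses that a split on $\mathit{belief}(D,B)$ leaves $\xi_c$ unchanged --- the same fact extracted in the proof of Theorem~\ref{thm:pairwise-capacity-equiv} --- while on the $\bmu$ side the corresponding merge of a cell's belief events for $D$ is mirrored by $BP^{\uparrow}(D, B \cup \neg B) = BP^{\uparrow}(D,\cX_D) = [1,1]$; the delicate bookkeeping is to keep the belief events of each domain unioned at the same rate on both sides of the identity throughout the induction, which is what ultimately pins down $\bmu(\omega_K) = \xi_c(K')$.
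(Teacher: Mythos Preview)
Your argument for $\Omega_{\bprogram}=\mathbb{P}(\bworlds)$ is exactly the paper's: for each world $w$ build a total capacity composite choice $\sigma_w$ from its probabilistic facts and its (already \normalized) belief facts, and take $K=\{\sigma_w\mid w\in\cS\}$.

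For the quantitative identity you go well beyond what the paper does. The paper's entire proof of $\bmu(\omega_K)=\xi_c(K')$ is a single sentence: having built the family $\{\sigma_w\}$ in the first part, it says ``it is enough to select as $K'$ the set $K$ built as above'' and stops. There is no discussion of the domain partition, no factoring of $\beta_{prb}$, no cell-wise identity, and no verification that $\bmu$ and $\xi_c$ actually agree on that particular $K'$; the paper treats the equality as evident once the canonical $K'$ has been named and leans on Theorem~\ref{thm:pairwise-capacity-equiv} for independence of the choice. Your refinement to total Bayesian parts, the matching of the resulting groups with $DomPrtn(\omega_{K'})$, and the inductive bookkeeping on belief-domain splits are all work that the paper omits. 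What your route buys is an explicit accounting of where super-additivity of $\Belief$ and $\Plaus$ must be reconciled with the additive sum defining $\xi_c$; what the paper's route buys is brevity, at the cost of leaving that reconciliation implicit. If you follow the paper, your proof collapses to the first paragraph plus the observation that the world-by-world $K'$ is the witness; if you keep your longer argument, be aware that you are supplying a justification the paper itself does not give.
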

\begin{proof}
The fact that  $\Omega_{\bprogram}\subseteq \mathbb{P}(\bworlds)$ is due to the fact that $\mathbb{P}(\bworlds)$ contains any possible set of worlds.
In the other direction, given a set of worlds $\omega=\{w_1,\ldots,w_n\}$, we can build the set of capacity composite choices $K=\{\kappa_1,\ldots,\kappa_n\}$
where $\kappa_i$ is built so that $\omega_{\kappa_i}=\{w_i\}$: let $w_i=(\cR,\cF,\cD,\cB)$, for every probabilistic fact $\f$, if $\f\in \cF$ then $(\f,1)\in 
\kappa_i$, and if  $\f\not\in \cF$ then $(\f,0)\in \kappa_i$; for every belief fact $\mathit{belief}(D,B)\in \nrml(\cB)$ then $\mathit{belief}(D,B)\in \kappa_i$.

To prove $\bmu(\omega_K)=\xi_c(K')$, it is enough to select as
$K'$ the set $K$ built as above.  
\end{proof}
The definitions of explanations and covering sets of explanations for
a query $q$ are the same as for PLP.  As a result, systems like
ProbLog and Cplint/PITA can be extended to inference over \ourpabbvs:
first a covering set of explanations is found using a conversion to
propositional logic (ProbLog~\citep{DBLP:journals/tplp/KimmigDRCR11})
or a program transformation plus tabling and answer subsumption (PITA~\citep{RigS11a}), then knowledge compilation is used to convert the
explanations into an intermediate representation (d-DNNF for ProbLog
and BDD for PITA) that makes them pairwise incompatible and allows the
computation of the belief/plausibility.
\begin{example}
  To illustrate these topics, consider {\tt r\_indep} that uses belief
  domains {\em urn1} and {\em urn2}.
  \begin{small}
    \begin{tabbing}          
      foooooooooooofofo00000000000oooooooo\=foooooooofoooooooooooooo\=ooo\=\kill
{\tt     r\_indep:-belief(urn1,\{blue\}).} \> {\tt r\_indep:- belief(urn2,\{orange\}).}
    \end{tabbing}
  \end{small}
    Since the belief domains (from Examples~\ref{ex:urn-domains} and
    \ref{ex:urn2}) are independent, a covering and pairwise incompatible set of explanations for {\tt r\_indep} is
        \begin{small}
    \begin{tabbing}          
      fooooo\=oooooooooofofooooooooofoooooooofooooooooo\=ooooo\=ooo\=\kill
      $K=\{$\>$\{\mathit{belief}(urn1,\{blue\}),\mathit{belief}(urn2,\neg\{orange\})\},$\\
       \>$\{\mathit{belief}(urn1,\neg\{blue\}),\mathit{belief}(urn2,\{orange\})\}$ \\
      \> $\{\mathit{belief}(urn1,\{blue\}),\mathit{belief}(urn2,\{orange\})\}\}$
    \end{tabbing}
    \end{small}
whose    capacity is:
    \begin{small}
     $$\xi_c(K)= ([0.1,0.7]\widehat{\times}[0.7,0.7])
      \ \widehat{+}\ ([0.3,0.9]\widehat{\times}[0.3,0.3])
      \ \widehat{+}\ ([0.1,0.7]\widehat{\times}[0.3,0.3]) = [0.19,0.97]$$
    \end{small}
Next, consider {\tt r\_dep}, both of whose rules both use the {\em urn1} belief domain
  \begin{small}
    \begin{tabbing}          
      foooooooooooofofooooooooofoooooooo\=foooooooooooooo\=ooo\=\kill
    {\tt r\_dep:- belief(urn1,\{blue\}).}\>  {\tt  r\_dep:- belief(urn1,\{red\})}
    \end{tabbing}
  \end{small}
\item
  The capacity of {\tt r\_dep} is:
  \begin{small}
    \begin{tabbing}          
      foooooooooooofofooooooooo\=foooooooofoooooooooooooo\=ooo\=\kill
      $\xi_c(\{\mathit{belief}(urn1,\{blue\}),\mathit{belief}(urn1,\{red\}) \}) = 
        \xi_c(\{\mathit{belief}(urn1,\{blue,red\})\}) = [0.4,1].$
    \end{tabbing}
  \end{small}
\end{example}

\subsubsection*{Towards a Transformation Based Implementation}
\label{sec:impl}
\newcommand{\ranvar}[1]{%
\ensuremath{\mathrm{#1}}}
\newcommand{\pitacalp}{PITAC\xspace}

We now sketch an implementation of inference in \ourpabbvs using a
PITA-like transformation~\citep{RigS11a}.  We designate as \pitacalp a
new transformation of a \ourpabbv into a normal program.  The normal
program produced contains calls for manipulating BDDs via the
\texttt{bddem} library (\url{https://github.com/friguzzi/bddem}):
\begin{itemize}
  \item $\mathit{zero/1}$, $\mathit{one/1},$
  $\mathit{and/3} $, $\mathit{or/3}$ and $\mathit{not/2}$: Boolean operations between BDDs;
  \item $get\_var\_n(R,S,\mathit{Probs},\mathit{Var})$: returns an integer indexing a variable with $|\mathit{Probs}|$ values and parameters $\mathit{Probs}$ (a list) associated to clause $R$ with
 grounding $S$;
  \item $\mathit{equality(+Var,+Value,-BDD)}$: returns a BDD 
  representing the equality $\mathit{Var} = \mathit{Value}$, i.e., that  variable with index $\mathit{Var}$ is assigned $\mathit{Value}$ in the BDD;
  \item $\mathit{mc(+BDD,-P)}$: returns the model count of the formula encoded by $\mathit{BDD}$.
  \end{itemize}

\pitacalp differs from PITA because it associates each ground atom
with three BDDs, for the probability, belief and plausibility respectively, instead of one.
The transformation for
an atom $a$ and a variable $D$, $\pitacalp(a,D)$, is $a$ with the variable
$D$  added as the last  argument. Variable $D$ will contain  triples $(PrD,BeD,PlD)$ 
with the three BDDs.

A probabilistic fact  $C_r=h:\Pi$, is transformed into the clause

\noindent
$\begin{array}{ll}
\pitacalp(C_r)=\pitacalp(h,D)\lpif get\_var\_n(r,S,[\Pi,1-\Pi],Var),  equality(Var,1,D).\\
\end{array}$

\noindent
A belief fact  $C_r=belief(\cD,B_i)$, is transformed into the clause

\noindent
$\begin{array}{ll}
\pitacalp(C_r)=\pitacalp(h,D)\lpif get\_var\_n(r,S,[\Pi_1,\ldots,\Pi_n],Var), equality(Var,i,D).\\
\end{array}$

\noindent
where the program contains the set of facts $\{mass(\cD,B_j,\Pi_j)\}_{j=1}^n$.
The transformation for clauses is the same as for PITA.

In order to answer queries, the goal {\em cap(Goal,Be,Pl)} is used,
which is defined~by

$
 \begin{array}{l}
cap(Goal,Be,Pl)\leftarrow    add\_arg(Goal,D,GoalD),\\
\ \ \ \  (call(GoalD)\rightarrow DD=(PrD,BeD,PlD),\\
\ \ \ \ \ \ mc(PrD,Pr), mc(BeD,Bel),mc(PlD,Pla),Be \ is \ Pr \cdot Bel,Pl \ is \ Pr \cdot Pla;\\
\ \ \ \ \ \  Be=0.0,Pl=0.0).
\end{array}
$

\noindent
where $add\_arg(Goal,D,GoalD)$ returns $\pitacalp(Goal,D)$ in $GoalD$.

This sketch must be refined by taking into account also the need for canonicalization and partitioning.

\section{Related Work}
\label{sec:related}
    
Among the possible semantics for probabilistic logic programs, such as CP-logic~\citep{MeeStrBlo08-ILP09-IC}, Bayesian Logic Programming~\citep{kersting2001towards}, CLP(BN)~\citep{costa2002clp}, Prolog Factor Language~\citep{DBLP:conf/ilp/GomesC12}, Stochastic Logic Programs~\citep{muggleton1996stochastic}, and ProPPR~\citep{wang2015efficient}, none of them, to the best of our knowledge, consider belief functions.

Other semantics, still not considering belief functions, target Answer Set Programming~\citep{brewka2011asp}, such as LPMLN~\citep{DBLP:conf/aaai/LeeY17}, P-log~\citep{DBLP:journals/tplp/BaralGR09}, and the credal semantics (CS)~\citep{cozman2017semantics}.
The latter deserves more attention: given an answer set program extended with probabilistic facts, the probability of a query $q$ is computed as follows.
First worlds are computed, as in the distribution semantics.
Each world $w$ now is an answer set program which may have zero or more stable models.
The atom $q$ can be present in no models of $w$, in some models, or in every model.
In the first case, $w$ makes no contribution to the probability of $q$.
If $q$ is present in \textit{some} but not all models of $w$, $P(w)$ is  computed as in Equation~\ref{def:pworld-measure}, and contributes to the \textit{upper probability} of $q$.
If instead the query is present in \textit{every} model of $w$, $P(w)$ contributes to both the upper \textit{and lower probability}.
That is, a query no longer has a point-probability but is represented by an interval.
Furthermore,~\cite{cozman2017semantics} also proved the CS is characterized by the set of all probability measures that dominate an infinitely monotone Choquet capacity.

The approach of~\cite{WanK09} incorporates belief functions into logic
programming, but
unlike ours is not based on the distribution semantics.  Rather,
belief and plausibility intervals are associated with rules. Within a
model $M$, the belief of an atom $A$ is a function of the belief intervals
associated with $A$ and the support of $A$ in $M$.

\section{Discussion} \label{sec:conclusion}

Let us now conclude the paper with a discussion about an extension of Example~\ref{example-ontology}.
\begin{example}[Reasoning with Beliefs and Probabilities]

Suppose that the UAV from Example~\ref{example-ontology} is extended
to search for stolen vehicles, alerting police when a match of
sufficient strength is detected.  The UAV can identify the type of a
vehicle using the visual model $V_{mod}$ of
Example~\ref{example-ontology}, although $V_{mod}$ alone is not
sufficient for this task.  The UAV can also detect signals from strong
RFID transponders but since it flies at a distance, it may
detect several RFID signals in the same area, leading to uncertainty.
Finally, the UAV considers only vehicles that are within a region
determined by the last reported location of the vehicle and the time
since the report.

Figure~\ref{fig:calp-prog} provides rules and pseudo-code for the
UAV's task.  
\begin{figure}[t]
\begin{small}
  \begin{tt}
\begin{tabbing}
  foooooooooooooooooo\=fooooooo\=ooofoooooooofoooooooooooooo\=ooo\=\kill
  stolen(Veh,VObj):- unusualType(Veh,Type),detection(VObj,VLoc),inArea(VLoc), \\
  \>                  belief($V_{mod}$,[VObj,Type]).\\
  stolen(Veh,VObj):- rfid(VObj,DSigl,Loc),inArea(Loc),\\
  \> targetSignal(TSig),signalMatch(DSig,TSig). \\
  n::inArea(Loc):- \>{\rm \emph{External function of last known location and time.}}\\
  n::signalMatch(Sig1,Sig2):-\>\>{\rm \emph{External function to match RFIDs.}}\\
  belief(Model,[Obj,Type]):-\>\>{\rm \emph{External function to match RFIDs.}}
\end{tabbing}
\end{tt}
\caption{UAV Rules for Stolen Vehicles.}\label{fig:calp-prog}
\end{small}
\end{figure}
The use of belief and probabilistic facts in these rules emulates an
application program's.
The first rule checks whether the make of the stolen vehicle, {\tt Veh},
is unusual.  If so, the probability that the location of the detected
object {\tt VObj} is within the search area is determined, and
combined with the belief and plausibility that {\tt VObj} is
consistent with that of {\tt Veh}.  In the second rule, if a detected
RFID signal is probabilistically in the search area, a probabilistic
match is made between the signal for {\tt VObj} and {\tt Veh}.
\end{example}
In this paper, we proposed an extension to the distribution semantics
to handle belief functions and introduced the \textit{Capacity Logic
  Programs} framework.  We precisely characterize that framework with
a set of theorems, and show how to compute the probability of queries.
Since CaLPs extend PLPs, exact inference in CaLPs is \#P-hard while
thresholded inference is PP-hard
(cf. \citep{DBLP:conf/ijcai/RaedtKT07,cozman2017semantics}), although the upper bounds for these
problems have not yet been determined. As future work, we also plan to
provide a practical implementation of our framework by extending the
cplint/PITA reasoner as outlined in Section~\ref{sec:blp-queries}.

\section*{Acknowledgements}
This work has been partially supported by Spoke 1 ``FutureHPC \& BigData'' of the Italian Research Center on High-Performance Computing, Big Data and Quantum Computing (ICSC) funded by MUR Missione 4 - Next Generation EU (NGEU).
DA and FR are members of the Gruppo Nazionale Calcolo Scientifico -- Istituto Nazionale di Alta Matematica (GNCS-INdAM).

\bibliographystyle{acmtrans}
\bibliography{biblio}

\begin{thebibliography}{}

\bibitem[\protect\citeauthoryear{Ash and Dol{\'e}ans-Dade}{Ash and Dol{\'e}ans-Dade}{2000}]{ash2000probability}
{\sc Ash, R.} {\sc and} {\sc Dol{\'e}ans-Dade, C.} 2000.
\newblock {\em Probability and Measure Theory}.
\newblock Harcourt/Academic Press.

\bibitem[\protect\citeauthoryear{Baral, Gelfond, and Rushton}{Baral et~al\mbox{.}}{2009}]{DBLP:journals/tplp/BaralGR09}
{\sc Baral, C.}, {\sc Gelfond, M.}, {\sc and} {\sc Rushton, N.} 2009.
\newblock Probabilistic reasoning with answer sets.
\newblock {\em Theory and Practice of Logic Programming\/}~{\em 9,\/}~1, 57--144.

\bibitem[\protect\citeauthoryear{Brewka, Eiter, and Truszczy\'{n}ski}{Brewka et~al\mbox{.}}{2011}]{brewka2011asp}
{\sc Brewka, G.}, {\sc Eiter, T.}, {\sc and} {\sc Truszczy\'{n}ski, M.} 2011.
\newblock Answer set programming at a glance.
\newblock {\em Communications of the ACM\/}~{\em 54,\/}~12, 92--103.

\bibitem[\protect\citeauthoryear{Choquet}{Choquet}{1954}]{choquet1954theory}
{\sc Choquet, G.} 1954.
\newblock Theory of capacities.
\newblock In {\em Annales de l'institut Fourier}. Vol.~5. 131--295.

\bibitem[\protect\citeauthoryear{Costa, Page, Qazi, and Cussens}{Costa et~al\mbox{.}}{2003}]{costa2002clp}
{\sc Costa, V.~S.}, {\sc Page, D.}, {\sc Qazi, M.}, {\sc and} {\sc Cussens, J.} 2003.
\newblock {CLP(BN)}: Constraint logic programming for probabilistic knowledge.
\newblock In {\em 19th International Conference on Uncertainty in Artificial Intelligence (UAI 2003)}. Morgan Kaufmann, 517--524.

\bibitem[\protect\citeauthoryear{Cozman}{Cozman}{2000}]{Cozm00}
{\sc Cozman, F.} 2000.
\newblock Credal networks.
\newblock {\em Artificial Intelligence\/}~{\em 120}, 199--233.

\bibitem[\protect\citeauthoryear{Cozman and Mau{\'a}}{Cozman and Mau{\'a}}{2017}]{cozman2017semantics}
{\sc Cozman, F.~G.} {\sc and} {\sc Mau{\'a}, D.~D.} 2017.
\newblock On the semantics and complexity of probabilistic logic programs.
\newblock {\em Journal of Artificial Intelligence Research\/}~{\em 60}, 221--262.

\bibitem[\protect\citeauthoryear{{De Raedt}, Kimmig, and Toivonen}{{De Raedt} et~al\mbox{.}}{2007}]{DBLP:conf/ijcai/RaedtKT07}
{\sc {De Raedt}, L.}, {\sc Kimmig, A.}, {\sc and} {\sc Toivonen, H.} 2007.
\newblock {ProbLog}: A probabilistic {Prolog} and its application in link discovery.
\newblock In {\em 20th International Joint Conference on Artificial Intelligence (IJCAI 2007)}, {M.~M. Veloso}, Ed. Vol.~7. AAAI Press, 2462--2467.

\bibitem[\protect\citeauthoryear{Giunchiglia and Lukasiewicz}{Giunchiglia and Lukasiewicz}{2020}]{giunchiglia2020coherent}
{\sc Giunchiglia, E.} {\sc and} {\sc Lukasiewicz, T.} 2020.
\newblock Coherent hierarchical multi-label classification networks.
\newblock In {\em Advances in neural information processing systems}. Vol.~33. Curran Associates, Inc., Red Hook, NY, 9662--9673.

\bibitem[\protect\citeauthoryear{Gomes and Costa}{Gomes and Costa}{2012}]{DBLP:conf/ilp/GomesC12}
{\sc Gomes, T.} {\sc and} {\sc Costa, V.~S.} 2012.
\newblock Evaluating inference algorithms for the {Prolog} factor language.
\newblock In {\em 21st International Conference on Inductive Logic Programming (ILP 2012)}, {F.~Riguzzi} {and} {F.~{\v Z}elezn{\'y}}, Eds. Lecture Notes in Computer Science, vol. 7842. Springer, 74--85.

\bibitem[\protect\citeauthoryear{Grabish}{Grabish}{2016}]{Grab2016}
{\sc Grabish, M.} 2016.
\newblock {\em Set Functions, Games, and Capacities in Decision Making}.
\newblock Springer.

\bibitem[\protect\citeauthoryear{Guo, Pleiss, Sun, and Weinberger}{Guo et~al\mbox{.}}{2017}]{GPSW17}
{\sc Guo, C.}, {\sc Pleiss, G.}, {\sc Sun, Y.}, {\sc and} {\sc Weinberger, K.} 2017.
\newblock On calibration of modern neural networks.
\newblock In {\em Proceedings of the 34th International Conference on Machine Learning, ICML 2017}.

\bibitem[\protect\citeauthoryear{Halpern and Fagin}{Halpern and Fagin}{1992}]{HalF92}
{\sc Halpern, J.} {\sc and} {\sc Fagin, R.} 1992.
\newblock Two views of belief: belief as generalized probability and helief as evidence.
\newblock {\em Artificial Intelligence\/}~{\em 54}, 275--312.

\bibitem[\protect\citeauthoryear{J{\o}sang}{J{\o}sang}{2016}]{Josa16}
{\sc J{\o}sang, A.} 2016.
\newblock {\em Subjective Logic: A Formalism for Reasoning Under Uncertainty}.
\newblock Springer.

\bibitem[\protect\citeauthoryear{Kersting and De~Raedt}{Kersting and De~Raedt}{2001}]{kersting2001towards}
{\sc Kersting, K.} {\sc and} {\sc De~Raedt, L.} 2001.
\newblock Towards combining inductive logic programming with bayesian networks.
\newblock In {\em 11th International Conference on Inductive Logic Programming (ILP 2001)}. Springer, 118--131.

\bibitem[\protect\citeauthoryear{Kimmig, Demoen, {De Raedt}, Costa, and Rocha}{Kimmig et~al\mbox{.}}{2011}]{DBLP:journals/tplp/KimmigDRCR11}
{\sc Kimmig, A.}, {\sc Demoen, B.}, {\sc {De Raedt}, L.}, {\sc Costa, V.~S.}, {\sc and} {\sc Rocha, R.} 2011.
\newblock On the implementation of the probabilistic logic programming language {ProbLog}.
\newblock {\em Theory and Practice of Logic Programming\/}~{\em 11,\/}~2-3, 235--262.

\bibitem[\protect\citeauthoryear{Lee and Yang}{Lee and Yang}{2017}]{DBLP:conf/aaai/LeeY17}
{\sc Lee, J.} {\sc and} {\sc Yang, Z.} 2017.
\newblock {LPMLN}, weak constraints, and {P-log}.
\newblock In {\em Proceedings of the Thirty-First {AAAI} Conference on Artificial Intelligence, February 4-9, 2017, San Francisco, California, {USA}}, {S.~Singh} {and} {S.~Markovitch}, Eds. {AAAI} Press, 1170--1177.

\bibitem[\protect\citeauthoryear{Meert, Struyf, and Blockeel}{Meert et~al\mbox{.}}{2010}]{MeeStrBlo08-ILP09-IC}
{\sc Meert, W.}, {\sc Struyf, J.}, {\sc and} {\sc Blockeel, H.} 2010.
\newblock {CP-Logic} theory inference with contextual variable elimination and comparison to {BDD} based inference methods.
\newblock In {\em Inductive Logic Programming}, {L.~De~Raedt}, Ed. Lecture Notes in Computer Science, vol. 5989. Springer, 96--109.

\bibitem[\protect\citeauthoryear{Muggleton et~al\mbox{.}}{Muggleton et~al\mbox{.}}{1996}]{muggleton1996stochastic}
{\sc Muggleton, S.} {\sc et~al\mbox{.}} 1996.
\newblock Stochastic logic programs.
\newblock {\em Advances in inductive logic programming\/}~{\em 32}, 254--264.

\bibitem[\protect\citeauthoryear{Poole}{Poole}{1993}]{DBLP:journals/ai/Poole93}
{\sc Poole, D.} 1993.
\newblock Probabilistic {Horn} abduction and {Bayesian} networks.
\newblock {\em Artificial Intelligence\/}~{\em 64,\/}~1, 81--129.

\bibitem[\protect\citeauthoryear{Poole}{Poole}{2000}]{DBLP:journals/jlp/Poole00}
{\sc Poole, D.} 2000.
\newblock Abducing through negation as failure: Stable models within the independent choice logic.
\newblock {\em Journal of Logic Programming\/}~{\em 44,\/}~1--3, 5--35.

\bibitem[\protect\citeauthoryear{Przymusinski}{Przymusinski}{1989}]{Przymusinski89}
{\sc Przymusinski, T.~C.} 1989.
\newblock Every logic program has a natural stratification and an iterated least fixed point model.
\newblock In {\em Proceedings of the Eighth ACM SIGACT-SIGMOD-SIGART Symposium on Principles of Database Systems}. PODS '89. Association for Computing Machinery, New York, NY, USA, 11--21.

\bibitem[\protect\citeauthoryear{Riguzzi}{Riguzzi}{2022}]{Rig23-BK}
{\sc Riguzzi, F.} 2022.
\newblock {\em Foundations of Probabilistic Logic Programming Languages, Semantics, Inference and Learning, Second Edition}.
\newblock River Publishers.

\bibitem[\protect\citeauthoryear{Riguzzi and Swift}{Riguzzi and Swift}{2011}]{RigS11a}
{\sc Riguzzi, F.} {\sc and} {\sc Swift, T.} 2011.
\newblock The {PITA} system: Tabling and answer subsumption for reasoning under uncertainty.
\newblock {\em Theory and Practice of Logic Programming\/}~{\em 11,\/}~4-5, 433--449.

\bibitem[\protect\citeauthoryear{Sato}{Sato}{1995}]{DBLP:conf/iclp/Sato95}
{\sc Sato, T.} 1995.
\newblock A statistical learning method for logic programs with distribution semantics.
\newblock In {\em ICLP'95: Proceedings of the 12th International Conference on Logic Programming}, {L.~Sterling}, Ed. MIT Press, 715--729.

\bibitem[\protect\citeauthoryear{Shafer}{Shafer}{1976}]{Shaf75}
{\sc Shafer, G.} 1976.
\newblock {\em A Mathematical Theory of Evidence}.
\newblock Princeton University Press.

\bibitem[\protect\citeauthoryear{Wan and Kifer}{Wan and Kifer}{2009}]{WanK09}
{\sc Wan, H.} {\sc and} {\sc Kifer, M.} 2009.
\newblock Belief logic programming: Uncertainty reasoning with correlation of evidence.
\newblock In {\em Logic Programming and Nonmonotonic Reasoning}. 316--328.

\bibitem[\protect\citeauthoryear{Wang, Mazaitis, Lao, and Cohen}{Wang et~al\mbox{.}}{2015}]{wang2015efficient}
{\sc Wang, W.~Y.}, {\sc Mazaitis, K.}, {\sc Lao, N.}, {\sc and} {\sc Cohen, W.~W.} 2015.
\newblock Efficient inference and learning in a large knowledge base: Reasoning with extracted information using a locally groundable first-order probabilistic logic.
\newblock {\em Machine Learning\/}~{\em 100}, 101--126.

\bibitem[\protect\citeauthoryear{Weichselberger}{Weichselberger}{2000}]{Weich01}
{\sc Weichselberger, K.} 2000.
\newblock The theory of interval probability as a unifying concept for uncertainty in knowledge-based systems.
\newblock {\em International Journal of Approximate Reasoning\/}~{\em 24}, 149--170.

\end{thebibliography}

\end{document}